\newcommand{\cmark}{\ding{51}}
\newcommand{\xmark}{-}
\newtheorem{theorem}{Theorem}[section]
\newtheorem{proposition}{Proposition}[section]
\newtheorem{example}{Example}[section]
\theoremstyle{definition}
\newtheorem{definition}{Definition}[section]
\theoremstyle{remark}
\newtheorem{remarkfixed}{Remark}[section]
\newcommand{\ie}{i.e.,\xspace}
\newcommand{\eg}{e.g.,\xspace}
\newcommand{\MC}{\Delta}
\newcommand{\PAV}{\mathrm{PAV}}
\renewcommand*{\le}{\leqslant}
\renewcommand*{\ge}{\geqslant}
\renewcommand*{\leq}{\leqslant}
\renewcommand*{\geq}{\geqslant}
\newcommand{\problemdef}[3]{
\begin{center}
	\begin{minipage}{0.9\textwidth}
		\noindent
		\textsc{#1}

				\vspace{2pt}
				\setlength{\tabcolsep}{3pt}
				\begin{tabularx}{\textwidth}{@{}lX@{}}
						\textbf{Input:} 		& #2 \\
						\textbf{Question:} 	& #3
					\end{tabularx}
	\end{minipage}
		\end{center}
}
\newcommand{\N}{N} \newcommand{\C}{C} \newcommand{\W}{W} \newcommand{\A}{A} \renewcommand{\l}{\ell}
\renewcommand{\P}{P} 
\newcommand{\pa}{party-approval\xspace}
\newcommand{\ca}{candidate-approval\xspace}
\newcommand{\phrag}{Phragm\'{e}n\xspace}
\newcommand{\maxP}{leximax-\phrag} \newcommand{\varP}{var-\phrag}
\newcommand{\seqP}{seq-\phrag}
\newcommand{\eneP}{Eneström-\phrag}
\title{Approval-Based Apportionment}
\author[1]{Markus Brill}
\author[2]{Paul G\"olz}
\author[3]{Dominik Peters}
\author[1]{\authorcr Ulrike Schmidt-Kraepelin}
\author[1]{Kai Wilker}
\affil[1]{TU Berlin} 
\affil[2]{Carnegie Mellon University}
\affil[3]{CNRS, LAMSADE, Universit\'e Paris-Daupine--PSL}
\date{\vspace{-1cm}}
\begin{document}

\maketitle

\begin{abstract}
In the apportionment problem, a fixed number of seats must be distributed among parties in proportion to the number of voters supporting each party. We study a generalization of this setting, in which voters can support multiple parties by casting approval ballots. This approval-based apportionment setting generalizes traditional apportionment and is a natural restriction of approval-based multiwinner elections, where approval ballots range over individual candidates instead of parties. Using techniques from both apportionment and multiwinner elections, we identify rules that generalize the D'Hondt apportionment method and that satisfy strong axioms which are generalizations of properties commonly studied in the apportionment literature. In fact, the rules we discuss provide representation guarantees that are currently out of reach in the general setting of multiwinner elections: First, we show that core-stable committees are guaranteed to exist and can be found in polynomial time. Second, we demonstrate that extended justified representation is compatible with committee monotonicity (also known as house monotonicity).
\end{abstract}

\section{Introduction}
\label{sec:intro}

The fundamental fairness principle of \textit{proportional representation} is relevant in a variety of applications ranging from recommender systems to digital democracy~\citep{Bril21a}. 
It features most explicitly in the context of political elections, which is the language we adopt for this paper.   
In this context, proportional representation prescribes that the number of representatives championing an opinion in a legislature should be proportional to the number of voters who favor that opinion. 

In most democratic institutions, proportional representation is implemented via what we call \emph{party-choice elections}: Candidates are members of political parties and voters are asked to choose their favorite party; 
each party is then allocated a number of seats that is (approximately) proportional to the number of votes it received. The problem of transforming a voting outcome into a distribution of seats is known as \emph{apportionment}. Analyzing the advantages and disadvantages of different apportionment methods has a long and illustrious political history and has given rise to an elegant mathematical theory~\citep{BaYo82a,Puke14a}.
\looseness=-1

Forcing voters to choose a single party prevents them from communicating any preferences beyond their most preferred alternative. For example, if a voter feels equally well represented by several political parties, there is no way to express this preference within the voting system. 
In the context of single-winner elections, \emph{approval voting} has been put forward as a solution to this problem as it strikes an attractive compromise between simplicity and expressivity \citep{BrFi07c,LaSa10a}.
Under approval voting, each voter is asked to specify a set of candidates she ``approves of,'' \ie voters can arbitrarily partition the set of candidates into approved candidates and disapproved ones. 
Proponents of approval voting argue that its introduction could 
increase voter turnout,
``help elect the strongest candidate,'' and 
``add legitimacy to the outcome'' of an election \citep[pp. 4--8]{BrFi07c}. 

The practical and theoretical appeal of approval voting in single-winner elections has led a number of scholars to suggest to also use approval voting for multiwinner elections, in which a fixed number of candidates need to be elected \citep{KiMa12a}.
Whereas, in the single-winner setting, the straightforward voting rule ``choose the candidate approved by the highest number of voters'' enjoys a strong axiomatic foundation \citep{Fish78d,Fish79a,Alos06a},
several ways of aggregating approval ballots have been proposed for the multiwinner setting  \citep{KiMa12a,LaSk20v3}.

Most studies of approval-based multiwinner elections assume that voters directly express their preference over individual candidates; we refer to this setting as \emph{candidate-approval} elections. This assumption runs counter to widespread democratic practice, in which candidates belong to political parties and voters indicate preferences over these parties (which induce implicit preferences over candidates).
In this paper, we therefore study \emph{party-approval} elections, in which voters express approval votes over parties and a given number of seats must be distributed among the parties.
We refer to the process of allocating these seats as \emph{approval-based apportionment}.

Throughout this paper, we interpret a ballot that approves a set~$S$ of parties as a preference for legislatures with a larger total number of members from parties in~$S$.
This interpretation generalizes the natural interpretation of party-choice ballots as preferences for legislatures with a larger number of members of the chosen party.
Our interpretation implicitly imputes perfect indifference between approved parties.
This means that we assume voters to be indifferent to the distribution of seats between approved parties. For example, consider only legislatures with a fixed total number of seats given to approved parties. Then a voter would be indifferent between a legislature where the approved parties all get an equal number of seats, and a legislature where just one of the approved parties obtains all those seats.
While this assumption is restrictive, it does allow for a simple voting process, and the additional expressivity of approval ballots compared to party-choice ballots seems attractive.

Indeed, we believe that party-approval elections are a promising framework for legislative elections in the real world, especially since allowing voters to approve multiple parties enables the aggregation mechanism to coordinate like-minded voters.
For example, under party-choice elections, two groups of voters might vote for parties that they mutually disapprove of. Approval ballots could reveal that both groups approve a third party of more general appeal. Given this information, a voting rule could then allocate more seats to this third party, leading to mutual gain. This cooperation is particularly necessary for small minority opinions that are not centrally coordinated. In such cases, finding a commonly approved party can make the difference between being represented or votes being wasted because the individual parties receive insufficient support.

One aspect that makes it easier to transition from party-choice elections to \pa elections (rather than to \ca elections) is that \pa elections can be implemented as \emph{closed-list} systems.
That is, parties can retain the power to choose the ordering in which their candidates are allocated seats, as they do in many current democratic systems.
By contrast, \ca elections necessarily confer this power to the voters (leading to an \emph{open-list} system), which might give parties an incentive to oppose a change of the voting system.
Of course, \pa elections are compatible with an open-list approach, since we can run a secondary mechanism alongside the \pa election to determine the order of party candidates.

\subsection{Related Work}

To the best of our knowledge, this paper is the first to formally develop and systematically study approval-based apportionment.
That said, several scholars have previously explored possible generalizations of existing aggregation procedures to allow for approval votes over parties.

For instance, \citet{BKP19a} 
study multiwinner approval rules that are inspired by classical apportionment methods.
Besides the setting of candidate approval, they explicitly consider the case where voters cast \pa votes. They conclude that these rules could
``encourage coalitions across party or factional lines, thereby diminishing gridlock and promoting consensus.''

Such desire for compromise is only one motivation for considering party-approval elections, as exemplified by recent work by \citet{SpGe19a}.
To allow for more efficient governing, they aim to concentrate the power of a legislature in the hands of few big parties, while nonetheless preserving the principle of proportional representation.
To this end, they let voters cast \pa votes and transform these votes into a party-choice election by assigning each voter to one of her approved parties.
Specifically, they propose to assign
voters to parties so that the strongest party has as many votes as possible. We later call this method \emph{majoritarian portioning}.

Several other papers consider extensions of approval-based voting rules to accommodate \pa elections.
In their paper introducing the \textit{satisfaction approval voting} rule, \citet{BrKi14a} discuss a variant of this rule adapted for party-approval votes. 
\citet{MoOl15a} and \citet{CMS19a} study two approval-based multiwinner rules due to Phragm\'en and Enestr\"om, and note that they also work for party-approval elections (which is true for any multiwinner rule using the embedding that we discuss in \Cref{sec:using_capp}). Both papers consider a monotonicity axiom for party-approval elections (``if a party receives additional approvals, it should receive additional seats'') but find that their two methods fail it.
For the case of two parties, they analyze the behavior of these rules as the house size approaches infinity. They find that both rules fail to converge to the most natural seat distribution.
\citet{JO19} analyze the limit behavior in more detail, and also show that Thiele's sequential rule (aka SeqPAV) does converge to the ideal value.

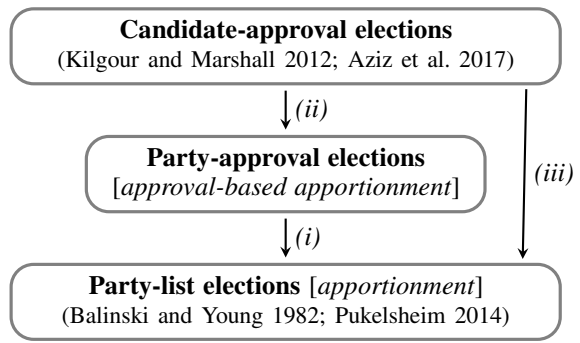
\begin{figure}
	\centering
\begin{tikzpicture}[scale=1, >=stealth,	shorten >=2pt, shorten < = 2pt, pa/.style={
rectangle,minimum height=8mm, minimum width=20mm, rounded corners=3mm, text width=110mm, text centered, inner sep=1.5mm,
very thick,draw=black!50, }]]
      \draw (0,0)      node[pa](ABC)
      {\textbf{Candidate-approval elections} \\ \citep{KiMa12a,LaSk20v3}} 
      ++(0,-2.2)
      node[pa,text width=70mm](PAPP)
      {\textbf{Party-approval elections}\\ 
      {(\textit{approval-based apportionment})} \\
      }  
      ++(0,-2.2)         node[pa](CAPP)
      {\textbf{Party-choice elections}
      {(\textit{apportionment})} \\
      \citep{BaYo82a,Puke14a}};     
      \draw[->,thick] (ABC) to node [auto]{\textit{(i)}} (PAPP);
      \draw[->, thick] (PAPP) to node [auto]{\textit{(ii)}} (CAPP);
      \draw[->, thick] (4.2,-0.59)
      to node [auto]{\textit{(iii)}}
      (4.2,-3.8);
\end{tikzpicture}
\caption{Relations between the different settings of multiwinner elections. An arrow from $X$ to $Y$ signifies that $X$ is a generalization of $Y$.
The relationship corresponding to arrow~\textit{(iii)} has been explored by \protect\citet{BLS18a}. We establish and explore the relationship~\textit{(i)} in \Cref{sec:using_capp} and the relationship~\textit{(ii)} in \Cref{sec:using_app}.}  
\label{fig:relations}
\end{figure}

\subsection{Relation to Other Settings}
\label{sec:relations}

We can position party-approval elections between two well-studied voting settings (see \Cref{fig:relations}). 

First, our setting can be viewed as a \textit{special case of approval-based multiwinner voting}, in which voters cast \textit{\ca} votes.
A \pa election can be embedded in this setting by replacing each party by multiple candidates belonging to this party, and by interpreting a voter's approval of a party as approval of all of its candidates. 
This embedding establishes \pa elections as a subdomain of \ca elections (see arrow~\textit{(i)} in \Cref{fig:relations}). 
In \Cref{sec:using_capp}, we explore the axiomatic and computational ramifications of this domain restriction. 

Second, approval-based apportionment \textit{generalizes standard apportionment} (arrow \textit{(ii)}), which corresponds to \pa elections in which all approval sets are singletons (i.e., party-choice elections). 
In \Cref{sec:using_app}, we propose a method to generalize apportionment methods to the party-approval setting using so-called \emph{portioning} methods.

\subsection{Contributions}
In this paper, we formally introduce the setting of approval-based apportionment and explore different possibilities of constructing axiomatically desirable aggregation methods for this setting.
Besides its conceptual appeal, this setting is also interesting from a technical perspective. 

Exploiting the relations described in \Cref{sec:relations}, we resolve problems that remain open in the more general setting of 
\ca elections. 
First, we show that the core of an approval-based apportionment problem is always nonempty, and that a popular multiwinner rule known as Proportional Approval Voting (PAV) always returns a core-stable committee.
We also present a polynomial-time variant of PAV that is also core stable.
Second, we prove that committee monotonicity is compatible with extended justified representation (a representation axiom proposed by \citet{ABC+16a}) by providing a rule that satisfies both properties.

Some familiar multiwinner rules (in particular, PAV) provide stronger representation guarantees when applied in the party-approval setting. However, for many standard multiwinner voting rules, we give examples that show that their axiomatic guarantees do not improve in the party-approval setting. From a computational complexity perspective, we show that some rules known to be NP-hard in the candidate-approval setting remain NP-hard to evaluate in the party-approval setting.
However, it becomes computationally easier to reason about proportionality axioms. Specifically, we show that it is tractable to check whether a given committee satisfies extended justified representation (or the weaker axiom of proportional justified representation). The analogs of these problems for candidate-approval elections are coNP-hard. These tractability results do not extend to checking whether a committee is core-stable: we show that this problem is coNP-complete for both party-approval and candidate-approval elections.

\section{The Model}
\label{sec:papp}

A \emph{\pa election} is a tuple $(\N,\P,\A,k)$ consisting of a set of voters $\N=\{1,\ldots,n\}$, a finite set of parties $\P$, a ballot profile $\A=(A_1,\ldots,A_n)$ where each ballot $A_i \subseteq \P$ is the set of parties approved by voter $i$, and the committee size $k \in \mathbb{N}$.
We assume that $A_i \neq \emptyset$ for all $i \in N$. When considering computational problems, we assume that $k$ is encoded in unary (see Remark~\ref{remark:unary}). This is a mild restriction since in most applications (such as legislative elections), $k$ is smaller than the number of voters.

A \emph{committee} in this setting is a multiset $\W : \P \rightarrow \mathbb{N}$ over parties, which determines the number of seats $\W(p)$ assigned to each party $p \in \P$. The size of a committee $W$ is $|W|=\sum_{p \in P} W(p)$, and we denote multiset addition and subtraction by $+$ and $-$, respectively.
For a voter $i$ and a committee $W$, we write $u_i(W) = \sum_{p \in A_i} W(p)$ for the number of seats in $W$ that are allocated to parties approved by voter $i$.
A \textit{\pa rule} is a function that takes a \pa election $(\N,\P,\A,k)$ as input and returns a committee $\W$ of valid size $|\W| = k$.\footnote{This definition implies that rules are \textit{resolute}, i.e., they only return a single committee. In the case of a tie between multiple committees, a tiebreaking mechanism is necessary. Our results hold independently of the choice of a specific tiebreaking mechanism.}

In our axiomatic study of \pa rules, we focus on two axioms capturing proportional representation:
extended justified representation and core stability.
Both are derived from their analogs in \ca elections (see \Cref{axiomsderivation}) where they were proposed by \citet{ABC+16a}.
To state these axioms, it is helpful to define the \emph{quota} of a subset $S$ of voters as
$q(S) = \lfloor k \cdot |S| / n \rfloor$.
Intuitively, $q(S)$ corresponds to the number of seats that the group $S$ ``deserves'' to be represented by (rounded down). 

\begin{definition}\label{def:ejr}
A committee $\W : \P \rightarrow \mathbb{N}$ provides \emph{extended justified representation (EJR)} for a \pa election $(\N,\P,\A,k)$  if there is no subset $S \subseteq \N$ of voters such that
$\bigcap_{i \in S} A_i \neq \emptyset$
and $u_i(W) < q(S)$ for all $i \in S$.
\end{definition}

In words, EJR requires that for every voter group $S$ with a commonly approved party, at least one voter of the group must approve at least $q(S)$ committee members. 
A \pa rule is said to \emph{satisfy EJR} 
if it only produces committees providing~EJR.

We can obtain a stronger representation axiom by removing the requirement of a commonly approved party.

\begin{definition}\label{def:core}
A committee $\W: \P \to \mathbb{N}$ is \emph{core stable} for a \pa election $(\N,\P,\A,k)$ if there is no nonempty subset $S \subseteq \N$ and committee $T : \P \to \mathbb{N}$ of size $|T| \leq q(S)$ such that $u_i(T) > u_i(W)$
for all $i \in S$.
The \emph{core} of a \pa election is the set of all core-stable committees.
\end{definition}

Core stability requires adequate representation even for voter groups that cannot agree on a common party, by ruling out the possibility that the group can deviate to a smaller committee that represents all voters in the group strictly better.
It follows from the definitions that core stability is a stronger requirement than EJR: 
If a committee violates EJR, there is a group $S$ that would prefer any committee of size $q(S)$ that assigns all seats to the commonly approved party.

Besides these representation axioms, a final axiom that we will discuss is \emph{committee monotonicity} \citep[\eg][]{BaCo08a,EFSS17a}.
A \pa rule $f$ satisfies this axiom if, for all \pa elections $(\N, \P, \A, k)$, it holds that $f(\N, \P, \A, k) \subseteq f(\N, \P, \A, k+1)$.
The apportionment literature calls this \textit{house monotonicity}.
Committee monotonic rules avoid the so-called \emph{Alabama paradox}, in which a party loses a seat when the committee size increases.
They can also be used to construct proportional rankings~\citep{SLB+17a,IsBr21b}.

\section{Constructing Party-Approval Rules via Multiwinner Voting Rules}
\label{sec:using_capp}

In this section, we show how \pa elections can be translated into \ca elections. This embedding allows us to apply established \ca rules to our setting. Exploiting this fact, we will prove the existence of core-stable committees for \pa elections.

\subsection{Preliminaries}
\label{sec:prelims:CAPP}

A \emph{candidate-approval election} is a tuple $(\N, \C, \A, k)$.
Just as for \pa elections, $\N = \{1, \dots, n\}$ is a set of voters, $\C$ is a finite set, $\A$ is an $n$-tuple of nonempty subsets of $\C$, and $k \in \mathbb{N}$ is the committee size.
The conceptual difference is that $\C$ is a set of individual candidates rather than parties.
This difference manifests itself in the definition of a committee because a single candidate cannot receive multiple seats.
That is, a \textit{candidate committee} $W$ is now simply a subset of $\C$ with cardinality $k$. (Therefore, it is usually assumed that $|C|\ge k$.)
A \textit{\ca rule} is a function that maps each \ca election to a candidate committee.

A diverse set of such voting rules has been proposed since the late 19th century~\citep{KiMa12a,Jans16a,LaSk20v3}, out of which we will only introduce the one which we use for our main positive result.
Let $H_j$ denote the $j$th harmonic number, \ie $H_j = \sum_{t=1}^{j} 1/t$.
Given $(\N, \C, \A, k)$, the \ca rule \emph{proportional approval voting (PAV)}, introduced by \citet{Thie95a}, chooses a candidate committee $W$ maximizing the \textit{PAV score} $\PAV(W) = \sum_{i \in \N} H_{|W \cap \A_i|}$.

We now describe EJR and core stability in the \ca setting, from which we derived our versions. Recall that $q(S) = \lfloor k \, |S| / n \rfloor$.
A candidate committee $W$ provides \textit{EJR} if there is no subset $S \subseteq \N$ and no integer $\l > 0$ such that $q(S) \ge \l$, $|\bigcap_{i \in S} \A_i| \geq \l$, and $|\A_i \cap W| < \l$ for all $i \in S$.
(The requirement $|\bigcap_{i \in S} \A_i| \geq \l$ is often called \emph{cohesiveness}.)
A \ca rule satisfies EJR if it always produces EJR committees.

The definition of core stability is even closer to the version in \pa elections:
A candidate committee $W$ is \textit{core stable} if there is no nonempty group $S \subseteq N$ and no set $T \subseteq \C$ of size $|T| \le q(S)$ such that $|A_i \cap T| > |A_i \cap W|$ for all $i \in S$.
The \textit{core} consists of all core-stable candidate committees.

\subsection{Embedding Party-Approval Elections} \label{subsec:embeddingPappCapp}
We have informally argued in \Cref{sec:relations} that \pa elections constitute a subdomain of \ca elections.
We formalize this notion by providing an embedding of \pa elections into the \ca domain.
Our approach is similar to that of \citet{BLS18a}, who have formalized how apportionment problems can be phrased as \ca elections.

For a given \pa election $(\N, \P, \A, k)$, we define a corresponding \ca election $(\N, C, \A', k)$ with the same set of voters~$N$ and the same committee size $k$. The set of candidates contains $k$ many ``clone'' candidates $p^{(1)}, \dots, p^{(k)}$ for each party $p \in P$, so $C = \bigcup_{p\in P} \{p^{(1)}, \dots, p^{(k)}\}$.
Voter $i$ approves a candidate~$p^{(j)}$ in the candidate-approval election if and only if she approves the corresponding party~$p$ in the \pa election. Thus, $A_i' = \bigcup_{p \in A_i} \{p^{(1)}, \dots, p^{(k)}\}$.
This embedding establishes \pa elections as a subdomain of \ca elections. As a consequence, we can apply rules from the more general candidate-approval setting to the \pa setting, by
\begin{enumerate}
	\item translating the \pa election into a \ca election, 
	\item applying the \ca rule, and 
	\item counting the number of chosen clones per party to construct a committee over parties.
\end{enumerate}

\begin{remarkfixed}
\label{remark:unary}
By our assumption that $k$ is encoded in unary for the purpose of complexity analysis (see \Cref{sec:papp}), the translation of a \pa{} election yields a polynomial-sized \ca{} election.
Thus, a polynomial-time \ca{} rule applied to the \pa{} election runs in polynomial time as well.
If $k$ was instead encoded in binary, elections with large $k$ and few parties could be described so concisely that even straightforward candidate-approval algorithms would formally have exponential running time.\footnote{However, some rules may admit implementations that remain efficient for binary $k$. For example, Rule X \citep{PeSk20awitharxiv} will repeatedly assign seats to the same party until one of its supporters runs out of virtual money. Since this happens at most $n$ times, this observation can be used to design an efficient algorithm (with runtime depending on $\log k$ instead of $k$). Still, a linear time dependence on $k$ is acceptable in most applications.}
(The same issue does not appear in \ca{} elections, where we need to list at least $k$ candidates, which makes the description verbose.)
\end{remarkfixed}

\label{axiomsderivation}
Having established \pa elections as a subdomain of \ca elections, our variants of EJR and core stability (Definitions~\ref{def:ejr} and~\ref{def:core}) are immediately induced by their \ca counterparts.
Any \ca rule satisfying an axiom in the \ca setting will satisfy the corresponding axiom in the \pa setting as well.
Note that, by restricting our view to party approval, the cohesiveness requirement of EJR is reduced to requiring a single commonly approved party.

\subsection{PAV Guarantees Core Stability}
\label{subsec:pavCore}
A powerful stability concept in economics, core stability is a natural extension of EJR\@.
It is particularly attractive because blocking coalitions do not need to unanimously approve any party; they only need to be able to coordinate for mutual gain.

Unfortunately, it is still unknown whether core-stable candidate committees exist for all \ca elections.\footnote{However, it is known that \textit{approximately} core-stable committees exist, for several different ways of approximating the core notion \citep{FMS18,CJM+19,JMK20a,PeSk20awitharxiv}.}
All standard candidate-approval rules either already fail weaker representation axioms such as EJR, or are known to fail core stability. In particular, PAV satisfies EJR, but may produce non-core-stable committees for \ca elections \citep{ABC+16a}.
\citet{PeSk20awitharxiv} show that a large class of \ca rules (so-called welfarist rules) must all fail core stability.

For our main result, we show that core stability can always be achieved in the \pa setting. Specifically, the committee selected by PAV is core stable for \pa elections.
Our proof uses a similar technique to the proof that PAV satisfies EJR for \ca elections \citep[Theorem 10]{ABC+16a}; we discuss the essential difference in \Cref{remark:core-disjoint}.
\begin{theorem}
    For every \pa election, PAV chooses a core-stable committee. Hence, the core of a \pa election is nonempty.
    \label{thm:pav_papp_core}
\end{theorem}

\begin{proof}
Consider a \pa election $(\N,\P,\A,k)$ and let $\W_1: P \rightarrow \mathbb{N}$ be the committee selected by PAV\@.
    Assume for a contradiction that $\W_1$ is not core stable. Then there is a nonempty coalition $S\subseteq N$ and a committee $T : \P \to \mathbb{N}$ such that $|T| \leq q(S) \leq  k \, |S| / n$ and $u_i(T) \ge u_i(W_1) + 1$ for every voter $i \in S$.

   For each party $p$, we let $\MC^+(p, \W_1)$ denote the marginal increase of the PAV score when we allocate an extra seat to $p$. Thus,
   \[
   \MC^+(p,\W_1)
   = \PAV(\W_1 + \{p\}) - \PAV(\W_1)
   = \sum_{i \in N_p} \frac{1}{u_i(\W_1) + 1},
   \]
   where $N_p =\{i \in N \mid p \in A_i\}$. 
   Let us calculate the average marginal increase when adding an elements of $T$:
   \begin{align*}
        \frac{1}{|T|} \sum_{p \in P} T(p) \, \MC^+(p, \W_1) 
        &= \frac{1}{|T|} \sum_{i \in \N} \sum_{p \in A_i} \frac{T(p)}{u_i(\W_1) + 1}
        \geq \frac{1}{|T|} \sum_{i \in S} \sum_{p \in A_i} \frac{T(p)}{u_i(\W_1) + 1} \\
        &\geq \frac{1}{|T|} \sum_{i \in S} \sum_{p \in A_i} \frac{T(p)}{u_i(T)} 
        = \frac{1}{|T|} \sum_{i \in S} \frac{u_i(T)}{u_i(T)} = \frac{|S|}{|T|} \geq \frac{n}{k}.
    \end{align*}
Thus, there is a party $p_1$ with $\MC^+(p_1,W_1) \ge n/k$. Let $W_2 = W_1 + \{p_1\}$.

	Next, for each party $p$ with $W_2(p) > 0$, let $\MC^-(p, W_2)$ be the marginal decrease of the PAV score if we take away a seat from $p$ in $W_2$. Thus,
	\[
	   \MC^-(p,W_2)
	   = \PAV(W_2)-\PAV(W_2 - \{p\})
	   = \sum_{i \in N_p} \frac{1}{u_i(W_2)}.
	\]
	The average marginal decrease of taking away a seat from $W_2$ is
	\begin{align*}
		\frac{1}{k+1} \sum_{p\in P} W_2(p) \, \MC^-(p,W_2)
		&= \frac{1}{k+1} \sum_{p\in P} \sum_{i \in N_p} \frac{W_2(p)}{u_i(W_2)} \\
		&= \frac{1}{k+1} \sum_{i \in N} \sum_{p \in A_i} \frac{W_2(p)}{u_i(W_2)} \\
		&= \frac{1}{k+1} |\{ i \in N : u_i(W_2) > 0 \}|
		\leq \frac{n}{k+1}.
	\end{align*}
	Thus, there is some party $p_2$ with $W_2(p_2) > 0$ such that $\MC^-(p_2, W_2) \le \frac{n}{k+1}$. Write $W_3 = W_2 - \{p_2\} = W_1 + \{p_1\} - \{p_2\}$. Then
	\begin{align*}
		\PAV(W_3)
		&= \PAV(W_2) - \MC^-(p_2, W_2) \\
		&= \PAV(\W_1) + \MC^+(p_1,W_1) - \MC^-(p_2, W_2) \\
		&\geq \PAV(\W_1) + \tfrac{n}{k} - \tfrac{n}{k+1} \\
		&> \PAV(\W_1),
	\end{align*}
	contradicting the optimality of $\W_1$.
\end{proof}

\begin{remarkfixed}
	Our proof of \Cref{thm:pav_papp_core} can be easily adapted to show that PAV satisfies the stronger version of core defined with respect to the \emph{Droop quota} \citep{Droo81a,Jans18a}, by assuming $|T| < (k + 1) |S|/n$ rather than $|T| \leq k |S| / n$.
\end{remarkfixed}

\begin{remarkfixed}
	\label{remark:core-disjoint}
	For candidate-approval elections, the proof of \Cref{thm:pav_papp_core} shows that PAV satisfies core stability restricted to ``disjoint objections'': if $W$ is the committee selected by PAV, then there can be no set $T$ with $T \cap W = \emptyset$ such that there is a coalition $S$ with $T \le q(S)$ and $u_i(T) > u_i(W)$ for all $i \in S$. Note that with our embedding of party-approval elections into candidate-approval elections, the disjointness assumption is without loss of generality, and hence PAV satisfies core stability for party-approval elections. The disjoint objections property also implies the result of \citet[Thm.~6]{PeSk20awitharxiv} that PAV satisfies the ``2-core'' property in the candidate-approval context: If there was an objection $T$ that more than doubled the utility of each coalition member, then $T \setminus W$ would be a disjoint core deviation, which is a contradiction.
\end{remarkfixed}

\begin{remarkfixed}
Because $H_j = \Theta(\log{j})$, 
the PAV objective is closely related to the classical \textit{maximum Nash welfare (MNW)} solution \citep{Nash50b,KaNa79a}. One can see PAV as a discretization of the MNW solution for selecting a probability distribution $\sigma : P \to [0,1]$ over parties, where we can interpret $\sigma(p)$ as the fraction of seats that should be allocated to party $p$. That rule satisfies a continuous analog of the core condition \citep{FGM16b,ABM17a}. 
However, other natural discretizations of the Nash rule do not satisfy the core condition. In the next section, we will see that discretizing the Nash rule using common apportionment methods leads to violations of core stability. Furthermore, selecting a committee that maximizes Nash welfare (rather than the PAV objective function) may fail core stability, even in party-choice elections \citep[Theorem~2]{BLS18a}.
\end{remarkfixed}

Given that PAV satisfies core stability in \pa elections but not in \ca elections, do other \ca rules satisfy stronger representation axioms when restricted to the \pa subdomain? We have studied this question for various rules besides PAV, and the answer was always negative; see \cref{sec:app:axiomatic} for details.\footnote{We present relevant counterexamples for the \ca rules \seqP, \maxP, \eneP, Rule X, and the Maximin Support Method. In addition, we verified for the \ca rules SeqPAV, RevSeqPAV, \varP, Approval Voting (AV), SatisfactionAV, MinimaxAV,  MonroeAV, GreedyMonroeAV, GreedyAV, HareAV, and Chamberlin--CourantAV that existing counterexamples can easily be adjusted to the \pa setting.}

A major drawback of PAV is that it fails committee monotonicity, and PAV continues to fail this axiom in the \pa setting.\footnote{Existing counterexamples for the \ca setting \citep{LaSk20v3} can be adapted in a straight-forward way.}
Therefore, parties may lose seats when the committee size is increased. In the next section, we construct \pa rules that avoid this undesirable behavior.

\section{Constructing Party-Approval Rules via Portioning and Apportionment}
\label{sec:using_app}
Party-approval elections are a generalization of party-choice elections, which can be thought of as \pa elections in which all approval sets are singletons.
Since there is a rich body of research on apportionment methods \citep{BaYo82a,Puke14a} which act on party-choice elections, it is natural to examine whether we can employ these methods for our setting as well. To use them, we will need to translate \pa elections into the party-choice domain on which apportionment methods operate. 
This translation thus needs to transform a collection of approval votes over parties into vote shares for each party.
Motivated by time sharing, \citet{BMS05} have developed a theory of such transformation rules, further studied by \citet{Dudd15a} and \citet{ABM19}. We will refer to this framework as \emph{portioning}.

The approach explored in this section, then, divides the construction of a \pa rule into two independent steps: (1)~portioning, which maps a \pa election to a vector of parties' shares; followed by (2)~apportionment, which transforms the shares into a seat distribution.

Both the portioning and the apportionment literature have discussed representation axioms similar in spirit to EJR and core stability. For both settings, several rules have been found to satisfy these properties. One might hope that by composing two rules that are each representative, we obtain a \pa rule that is also representative (and satisfies, say, EJR). If we succeed in finding such a combination, it is likely that the resulting voting rule will automatically satisfy committee monotonicity since most apportionment methods satisfy this property. In the general candidate-approval setting (considered in \Cref{sec:using_capp}), the existence of a rule satisfying both EJR and committee monotonicity is an open problem.

\subsection{Preliminaries}

We start by introducing relevant notions from the literature on portioning~\citep{BMS05,ABM19} and apportionment \citep{BaYo82a,Puke14a}, with notation suitably adjusted to our setting. 

\subsubsection{Portioning}
A \textit{portioning problem} is a triple $(\N, \P, \A)$, just as in \pa voting but without a committee size.
A \emph{portioning} is a function $r: P \rightarrow [0,1]$ with $\sum_{p \in P} r(p)=1$. We interpret $r(p)$ as the vote share of party $p$. 
A \emph{portioning method} maps each portioning problem $(\N, \P, \A)$ to a portioning.

Our minimum requirement on portioning methods will be that they uphold proportionality if all approval sets are singletons, \ie if we are already in the party-choice domain.
Formally, we say that a portioning method is \emph{faithful} if for all $(\N, \P, \A)$ with $|A_i|=1$ for all $i \in N$, the resulting portioning $r$ satisfies $r(p) = |\{i \in \N \mid \A_i = \{p\}\}| / n$ for all $p \in P$.
Among the portioning methods considered by \citet{ABM19},
only three are faithful. They are defined as follows. 
\begin{description}
\item[Conditional utilitarian portioning] selects, for each voter $i$, $p_i$ as a party in $A_i$ approved by the highest number of voters. Then, $r(p) = |\{i \in N \mid p_i = p\}| / n$ for all $p \in P$.

\item[Random priority] computes $n!$ portionings, one for each permutation $\sigma$ of $N$, and returns their average.
The portioning for $\sigma = (i_1, \dots, i_n)$  maximizes $\sum_{p \in A_{i_1}}\! r(p)$, breaking ties by maximizing $\sum_{p \in A_{i_2}}\! r(p)$, and so forth.

\item[Nash portioning] selects the portioning $r$ maximizing the Nash welfare $\prod_{i \in \N} \big( \sum_{p \in \A_i} \! r(p) \big)$. 
\end{description}
\noindent
When computing the outcomes of these rules, ties may occur. For our results it will not matter how ties are broken: we only use these rules in counterexamples in which no ties occur.

On first sight, Nash portioning seems particularly promising because it satisfies portioning versions of core stability and EJR \citep{ABM19,GuNe14a}. Concretely, it satisfies a property called \textit{average fair share} introduced by \citet{ABM19}, which requires that there is no subset $S\subseteq N$ of voters such that $\bigcap_{i \in S} A_i \neq \emptyset$ and $\frac{1}{|S|}\sum_{i\in S}\sum_{p \in A_i} r(p) < |S|/|N|$. However, despite these promising properties, we will see that Nash portioning does not work for our purposes. Instead, we will need to make use of a more recent portioning approach, which was proposed by \citet{SpGe19a} in the context of \pa voting.
\begin{description}
\item[Majoritarian portioning]
proceeds in rounds $j = 1, 2, \dots$.
Initially, all parties and voters are \emph{active}.
In iteration $j$, select the active party $p_j$ that is approved by the highest number of active voters.
Let $N_j$ be the set of active voters who approve $p_j$.
Then, set $r(p_j)$ to $|N_j| / n$, and mark $p_j$ and all voters in $N_j$ as inactive.
If active voters remain, start the next iteration; otherwise, return $r$.
\end{description}
\noindent
Under majoritarian portioning, we ignore the approval preferences of voters after they have been ``assigned'' to a party. 
Note that conditional utilitarian portioning is a similar sequential method which does, however, not ignore the preferences of inactive voters.

\subsubsection{Apportionment}
An \emph{apportionment problem} is a tuple $(\P, r, k)$, which consists of a finite set of parties $\P$, a portioning $r : \P \to [0, 1]$ specifying the vote shares of parties, and a committee size $k \in \mathbb{N}$.
Committees are defined as for \pa elections, and an \emph{apportionment method} maps apportionment problems to committees~$\W$ of size~$k$.

An apportionment method satisfies \emph{lower quota} if each party $p$ is always allocated at least $\lfloor k \cdot r(p) \rfloor$ seats in the committee.
Furthermore, an apportionment method $f$ is \emph{committee monotonic} if  $f(\P, r, k) \subseteq f(\P, r, k + 1)$ for every apportionment problem $(\P, r, k)$.

Among the standard apportionment methods, only two satisfy both lower quota and committee monotonicity: the \textit{D'Hondt method} (aka \textit{Jefferson method}) and the \emph{quota method}.\footnote{All other \textit{divisor methods} fail lower quota, and the \textit{Hamilton method} is not committee monotonic \citep{BaYo82a}.}
The D'Hondt method assigns the $k$ seats iteratively, each time giving the next seat to the party $p$ with the largest quotient $r(p)/(s(p)+1)$, where $s(p)$ denotes the number of seats already assigned to $p$.
The \emph{quota method} \citep{BaYo75a} is identical to the D'Hondt method, except that, in the $j$th iteration, only parties $p$ satisfying $s(p)/j<r(p)$ are eligible for the allocation of the next seat. Ties may be broken arbitrarily.

\subsubsection{Composition}
If we take any portioning method and any apportionment method, we can compose them to obtain a \pa rule. 
Formally, the composition of portioning method $R$ and apportionment method $M$ maps each \pa election $(N,P,A,k)$ to a committee $M(P,R(N,P,A),k)$.
Note that if the apportionment method is committee monotonic then so is the composed rule, since the portioning is independent of $k$.

\subsection{Composed Rules That Fail EJR}
Perhaps surprisingly, many pairs of portioning and apportionment methods fail EJR\@.
This is certainly true if the individual parts are not representative themselves. For example, if an apportionment method $M$ ``properly'' fails lower quota (in the sense that there is a rational-valued input $r$ on which lower quota is violated), then one can construct an example profile on which any composed rule using $M$ fails EJR:  Construct a \pa election with singleton approval sets in which the voter counts are proportional to the shares in the counterexample $r$. Then any faithful portioning method, applied to this election, must return $r$. Since $M$ fails lower quota on $r$, the resulting committee will violate EJR\@.
By a similar argument, suppose that an apportionment method violates committee monotonicity, and that there is a rational-valued counterexample. Then the apportionment method, when composed with a faithful portioning method, will give rise to a \pa rule that fails committee monotonicity.

As mentioned above, D'Hondt and the quota method are the only standard apportionment rules to satisfy both lower quota and committee monotonicity.
However, the composition of either option with the conditional utilitarian, random priority, or Nash portioning methods fails EJR, as the following examples show.

\begin{example}
    Let $n = k = 6$, $\P = \{p_0, p_1, p_2, p_3\}$, 
    and consider the ballot profile
    $\A = (\{p_0\}, \{p_0\}, \{p_0, p_1, p_2\}, \{p_0, p_1, p_2\}, \{p_1, p_3\}, \{p_2, p_3\})$.

    Then, the conditional utilitarian solution sets $r(p_0) = 4/6$, $r(p_1) = r(p_2) = 1/6$, and $r(p_3) = 0$.
    Any apportionment method satisfying lower quota allocates four seats to $p_0$, one each to $p_1$ and $p_2$, and none to $p_3$.
    The resulting committee does not provide EJR since the last two voters, who jointly approve~$p_3$, have a quota of $q(\{5,6\})=2$ that is not met. \qed
\end{example}

\begin{example}
    \label{ex:nash-fails-ejr}
    Let $n = k = 6$, $\P = \{p_0, p_1, p_2, p_3\}$, 
    and consider the ballot profile 
    $\A = (\{p_0\}, \{p_0\}, \{p_0, p_1, p_2\}, \{p_0, p_1, p_3\}, \{p_1\}, \{p_2, p_3\})$.

    Random priority chooses the portioning $r(p_0) = 23/45$, $r(p_1) = 23/90$, and $r(p_2) = r(p_3) = 7/60$. 
    Both D'Hondt and the quota method allocate four seats to $p_0$, two seats to $p_1$, and none to the other two parties.
    This violates the claim to representation of the sixth voter (with $q(\{6\})=1$).

    Nash portioning produces a fairly similar portioning, with $r(p_0) \approx 0.5302$, $r(p_1) \approx 0.2651$, and $r(p_2) = r(p_3) \approx 0.1023$.
    D'Hondt and the quota method produce the same committee as above, leading to the same EJR violation. \qed
\end{example}
It might be surprising that Nash portioning combined with a lower-quota apportionment method violates EJR. After all, Nash portioning satisfies core stability in the portioning setting, 
which is a strong notion of proportionality, and the lower-quota property limits the rounding losses when moving from the portioning to a committee.
As expected, in the election of Example~\ref{ex:nash-fails-ejr}, the portioning produced by Nash gives sufficient representation to the sixth voter since $r(p_2) + r(p_3) \approx 0.2047 > 1/6$.
However, since both $r(p_2)$ and $r(p_3)$ are below $1 / 6$ on their own, lower quota does not apply to either of the two parties, and the sixth voter loses all representation in the apportionment step.\footnote{There are similar examples where Nash portioning with D'Hondt apportionment violates EJR even though every party receives at least one seat, and examples where EJR is violated by a margin of more than one seat.}

\subsection{Composed Rules That Satisfy EJR}
\label{sec:composed-italian}

As we have seen, several initially promising portioning methods fail to compose to a rule that satisfies EJR\@. 
One reason is that these portioning methods are happy to assign small shares to several parties. The apportionment method may round several of those small shares down to zero seats. This can lead to a failure of EJR when not enough parties obtain a seat. It is difficult for an apportionment method to avoid this behavior since the portioning step obscures the relationships between different parties that are apparent from the approval ballots of the voters.

Since majoritarian portioning maximizes the seat allocations to the largest parties, it tends to avoid the problem we have just identified. While it fails the strong representation axioms that Nash portioning satisfies, this turns out not to be crucial: Composing majoritarian portioning with any apportionment method satisfying lower quota yields an EJR rule. If we use an apportionment method that is also committee monotonic, such as D'Hondt or the quota method, we obtain a \pa rule that satisfies both EJR and committee monotonicity.

\begin{theorem}\label{thm:italian}
Let $M$ be a committee monotonic apportionment method satisfying lower quota. Then, the \pa rule composing majoritarian portioning and $M$
satisfies EJR and committee monotonicity. 
\end{theorem}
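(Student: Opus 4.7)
The plan is to handle committee monotonicity in one line and then focus the work on EJR. Committee monotonicity is immediate because majoritarian portioning does not depend on $k$: it produces the same portioning $r$ for every committee size, so increasing $k$ to $k+1$ only changes the apportionment instance in the last coordinate, and the conclusion follows from committee monotonicity of $M$.

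For EJR, I would argue by contradiction. Suppose some group $S \subseteq \N$ with a commonly approved party $p^* \in \bigcap_{i \in S} A_i$ satisfies $\sum_{p \in A_i} W(p) < q(S)$ for every $i \in S$. Label the iterations of majoritarian portioning $j = 1, 2, \dots, m$, producing parties $p_j$ and voter sets $N_j$. The key step is to consider the first iteration $j^\ast$ in which some voter of $S$ is deactivated, i.e., the smallest $j$ with $N_j \cap S \neq \emptyset$. Prior to this iteration, no voter in $S$ has been deactivated, so every $i \in S$ is still active and still approves $p^*$. I would then observe that $p^*$ itself must also still be active at the start of iteration $j^\ast$: otherwise, $p^* = p_{j'}$ for some $j' < j^\ast$, but then $N_{j'}$ would have contained all of $S$, contradicting the minimality of $j^\ast$.

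Given that $p^*$ is active with $|S|$ active approvers at the start of round $j^\ast$, the party $p_{j^\ast}$ chosen in that round satisfies $|N_{j^\ast}| \geq |S|$, hence $r(p_{j^\ast}) = |N_{j^\ast}|/n \geq |S|/n$. Invoking lower quota of $M$ now yields
\[
W(p_{j^\ast}) \;\geq\; \lfloor k \cdot r(p_{j^\ast}) \rfloor \;\geq\; \lfloor k \cdot |S|/n \rfloor \;=\; q(S).
\]
Finally, pick any voter $i \in S \cap N_{j^\ast}$ (nonempty by choice of $j^\ast$). By definition of $N_{j^\ast}$, we have $p_{j^\ast} \in A_i$, so $\sum_{p \in A_i} W(p) \geq W(p_{j^\ast}) \geq q(S)$, contradicting the assumed EJR violation.

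The proof is conceptually short; the only subtle step is identifying the right iteration to examine. The main obstacle to getting a clean argument is the temptation to track the cumulative effect of many rounds on $S$; the slicker approach is to look only at the first round that touches $S$, since up to that point the coalition is fully intact and $p^*$ is necessarily still in the running, which is exactly what forces the selected party $p_{j^\ast}$ to have at least $|S|$ supporters and thereby absorb the lower-quota argument.
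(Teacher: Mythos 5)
Your proof is correct and follows essentially the same route as the paper's: identify the first round $j^\ast$ of majoritarian portioning that deactivates a member of $S$, argue that $p^\ast$ is still active there so $|N_{j^\ast}| \ge |S|$, and let lower quota force $W(p_{j^\ast}) \ge q(S)$ for a voter in $S \cap N_{j^\ast}$. The only (welcome) difference is that you make explicit the observation that $p^\ast$ cannot have been selected in an earlier round, which the paper leaves implicit.
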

\begin{proof}
    Consider a \pa election $(N,P,A,k)$ and let $r$ be the outcome of  majoritarian portioning applied to $(N,P,A)$.
Let $N_1, N_2, \dots$ and $p_1, p_2, \dots$ be the voter groups and parties in the construction of majoritarian portioning, so that 
	$r(p_j) = |N_j|/n$ for all $j$.
	
	Consider the committee $W = M(P,r,k)$ and suppose that EJR is violated, i.e., that there exists a group $S \subseteq N$ with $\bigcap_{i \in S} A_i \neq \emptyset$ and $u_i(W) < q(S)$ for all $i \in S$. 
	
	Let $j$ be minimal such that $S \cap N_j \neq \emptyset$. 
	We now show that $|S| \leq |N_j|$.
    By the definition of $j$, no voter in $S$ approves of any of the parties $p_1, p_2, \dots p_{j-1}$; thus, all those voters remain active in round $j$.
    Consider a party $p^* \in \bigcap_{i \in S} A_i$.
    In the $j$th iteration of majoritarian portioning, this party had an approval score (among active voters) of at least $|S|$.
    Therefore, the party $p_j$ chosen in the $j$th iteration has an approval score that is at least $|S|$ (of course, $p^* = p_j$ is possible).
    The approval score of party $p_j$ equals $|N_j|$. Therefore, $|N_j| \ge |S|$. 
	
	Since $|N_j| \geq |S|$, we have $q(N_j)\geq q(S)$. 
	Since $M$ satisfies lower quota, it assigns at least 
	$\lfloor k \cdot r(p_j) \rfloor = \lfloor k \, (|N_j|/n) \rfloor = q(N_j)$ 
	seats to party $p_j$. 
    Now consider a voter $i \in S \cap N_j$.
    Since this voter approves party $p_j$, we have $u_i(W) \ge W(p_j) \geq q(N_j) \ge q(S)$, a contradiction.

    This shows that EJR is indeed satisfied; committee monotonicity follows from the committee monotonicity of $M$.  
\end{proof}

While the \pa rules identified by \Cref{thm:italian} satisfy EJR and committee monotonicity, they do not reach our gold standard of representation, i.e., core stability:
\begin{example}
\label{ex:italiannocore}
    Let $n = k = 16$, $P = \{p_0, \dots, p_4\}$, and consider the following ballot profile:
    \begin{align*}
        & 4 \times \{p_0, p_1\}, \qquad 3 \times \{p_1, p_2\}, \qquad 1 \times \{p_2\} \\
        & 4 \times \{p_0, p_3\}, \qquad 3 \times \{p_3, p_4\}, \qquad 1 \times \{p_4\}
    \end{align*}
Majoritarian portioning allocates $1/2$ to~$p_0$ and $1/4$ each to~$p_2$ and~$p_4$.
    Any lower-quota apportionment method must translate this into 8 seats for $p_0$ and 4 seats each for $p_2$ and $p_4$.
    This committee is not in the core:
    Let $S$ be the coalition of all 14 voters who approve multiple parties, and
    let $T$ allocate 4 seats to $p_0$ and 5 seats each to $p_1$ and $p_3$.
    This gives strictly higher representation to all members of the coalition. \qed
\end{example}

The example makes it obvious why majoritarian portioning cannot satisfy core stability:
All voters approving of $p_0$ get deactivated after the first round, which makes $p_2$ seem universally preferable to $p_1$.
However, $p_1$ is a useful vehicle for cooperation between the group approving $\{p_0, p_1\}$ and the group approving $\{p_1, p_2\}$.
Since majoritarian portioning is blind to this opportunity, it cannot guarantee core stability.

The example also illustrates the power of core stability:
The deviating coalition does not agree on any single party they support, but would nonetheless benefit from the deviation. Core stability is sensitive to this demand for better representation.

\section{Computational Aspects}

To use a voting rule, we need to compute its output. Ideally, we would like an efficient (i.e., polynomial-time) algorithm for this task, so that we can announce the voting outcome soon after all votes have been cast. Fortunately, many rules admit fast algorithms. For example, the composed rules from Section~\ref{sec:composed-italian} can be computed efficiently as long as the employed apportionment method is computable in polynomial time (which is the case for D'Hondt and the quota method). In addition, by our discussion in Remark~\ref{remark:unary}, every multiwinner voting rule that runs in polynomial time for the \ca setting also runs in polynomial time for the \pa setting.

That being said, given our result about core stability in 
Section \ref{sec:using_capp}, we are particularly interested in computing the outcome of PAV, which is NP-hard to compute in the \ca setting \citep{AGG+15a}. Since \pa elections are a restricted domain, it is in principle possible that PAV is easier to compute on that domain, but, as we show in \cref{app:compAspects}, hardness still holds for \pa elections.

\begin{restatable}{theorem}{restatePAVhard}
    For a given party-approval election and threshold $s \in \mathbb{R}$, deciding whether there exists a committee with PAV score at least $s$ is NP-hard.\label{thm:pav_np_hard}
\end{restatable}

Equally confronted with the computational complexity of PAV, \citet{AEH+18a} proposed a local-search variant of PAV, which runs in polynomial time and guarantees EJR in the \ca setting. Using the same approach, we can find a core-stable committee in the \pa setting.

\begin{restatable}{theorem}{restateCoreInPoly}
    Given a \pa election, a core-stable committee can be computed in polynomial time. 
    \label{thm:pav_comp_core}
\end{restatable}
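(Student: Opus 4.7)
My plan is to adapt the local-search variant of PAV introduced by \citet{AEH+18a} to the \pa setting. The algorithm starts with an arbitrary committee $\W_0$ of size $k$ and, at each step, checks whether there exist a party $p_1$ in the support of the current committee $\W$ and a party $p_2 \in \P$ such that replacing one seat of $p_1$ by one seat of $p_2$ increases the PAV score by at least a multiplicative factor of $1+\delta$, for a threshold $\delta = 1/\mathrm{poly}(n,k)$ fixed below. Whenever such a swap is found, it is performed and the search continues; otherwise the algorithm halts and returns $\W$. Each iteration runs in polynomial time by examining at most $|\P|^2$ candidate swaps, and since the PAV score is bounded above by $n H_k = O(n \log k)$, the number of iterations is $O(\log(n H_k)/\log(1+\delta)) = \mathrm{poly}(n,k)$.

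To argue that the terminal committee $\W$ is core stable, I would closely mirror the proof of \Cref{thm:pav_papp_core}. Suppose for contradiction that $\W$ admits a blocking coalition $S \subseteq \N$ and a deviating committee $T$ with $|T| \leq q(S)$ and $\sum_{p \in A_i} T(p) > \sum_{p \in A_i} \W(p)$ for all $i \in S$. The marginal-contribution averaging arguments from that proof identify a party $p_1$ in the support of $\W$ with $\MC(p_1, \W) \leq n/k$ and a party $p_2$ in the support of $T$ with $\MC(p_2, \W + \{p_2\}) \geq n/k$. A direct expansion shows that the swap improves the PAV score by exactly
\[
\MC(p_2, \W + \{p_2\}) - \MC(p_1, \W) + \sum_{i \in N_{p_1} \cap N_{p_2}} \frac{1}{u_i(\W) \bigl(u_i(\W) + 1\bigr)}.
\]
It therefore suffices to exhibit a pair $(p_1, p_2)$ for which this quantity is at least $1/\mathrm{poly}(n, k)$ whenever $\W$ fails core stability; this will contradict termination once $\delta$ is taken below the corresponding multiplicative threshold.

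The main obstacle will be handling the degenerate cases addressed at the end of the proof of \Cref{thm:pav_papp_core}, where both marginal contributions equal exactly $n/k$ and the first two terms above cancel. If some voter $i \in S$ approves a party $p_1$ in the support of $\W$, I would pick any $p_2 \in A_i$ in the support of $T$; then $i \in N_{p_1} \cap N_{p_2}$ and the remaining sum is at least $1/\bigl(u_i(\W)(u_i(\W) + 1)\bigr) \geq 1/(k(k+1))$. Otherwise, no voter in $S$ approves any party in $\W$, and the equality conditions in the derivation of \Cref{thm:pav_papp_core} force at least $n/k$ voters of $S$ to commonly approve some $p$ in the support of $T$ while being unrepresented in $\W$, which is a genuine EJR violation; the local-search EJR analysis of \citet{AEH+18a} then produces a swap improving PAV by a factor $1 + \delta'$ for some inverse-polynomial $\delta'$. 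Setting $\delta$ to the minimum of the thresholds arising from these cases yields a polynomial-time algorithm whose output is a core-stable committee.
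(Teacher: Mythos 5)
Your overall plan coincides with the paper's: run the local-search variant of PAV of \citet{AEH+18a} on the embedded election, bound the number of iterations by the total PAV score divided by the per-step improvement, and then rerun the core-stability argument of \Cref{thm:pav_papp_core} to show that a non-core-stable committee always admits an improving swap exceeding the termination threshold. Your swap-gain identity $\MC(p_2, \W + \{p_2\}) - \MC(p_1, \W) + \sum_{i \in N_{p_1} \cap N_{p_2}} \frac{1}{u_i(\W)(u_i(\W)+1)}$ is correct and is exactly the mechanism the paper exploits (it derives the bonus term $\frac{1}{(u_i(\W)+1)u_i(\W)} \geq \frac{1}{k(k-1)}$ for the single common approver $i \in S$), and your fallback to the EJR guarantee of LS-PAV when no voter in $S$ is represented matches the paper's Case~3.

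There is, however, a genuine gap in the middle of the argument: your case split is inherited from the \emph{exact} optimality proof and does not survive the passage to local search. Distinguishing ``one of the inequalities is strict'' from ``both marginal contributions equal exactly $n/k$'' is the right dichotomy when $\W$ is a global PAV maximizer, but for a local search with threshold $\delta > 0$ a strict inequality yields no quantitative improvement (the deviation from $n/k$ could be arbitrarily tiny), so it does not contradict termination; and the complementary case is then not ``exact equality'' but ``all marginal contributions within some $\epsilon$ of $n/k$.'' The paper therefore splits on whether some party deviates from $n/k$ by at least $\epsilon$, and in the near-tight case it must do two further things you omit: (i) upgrade the existential averaging bounds to bounds on the \emph{specific} parties $p_1, p_2 \in A_i$ you select, showing $\MC(p_1, \W) < n/k + (k-1)\epsilon$ for \emph{every} $p_1$ in the support of $\W$ and $\MC(p_2, \W + \{p_2\}) > n/k - (k-1)\epsilon$ for every $p_2$ in the support of $T$ (your chosen pair need not be the ones witnessing the averaging bounds); and (ii) verify that the bonus $\frac{1}{k(k+1)}$ strictly dominates the resulting slack of up to $2(k-1)\epsilon$, which dictates the choice $\epsilon = \frac{1}{(1+2(k-1))(k-1)k}$. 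One also has to rule out $p_1 = p_2$ (for which the swap is a no-op); the paper does this via the same near-tightness bounds. Finally, your multiplicative threshold adds avoidable bookkeeping: at termination it only guarantees that no swap gains $\delta \cdot \PAV(\W)$ additively, so $\delta$ must be scaled down by $n H_k$; the paper's additive threshold gives the iteration bound $\mathcal{O}(n k^3 \ln k)$ directly.
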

We defer the proof of this theorem to \cref{app:compAspects}.
In \cref{app:phragrules}, we additionally show that an optimization variant of \phrag's rule \citep{BFJL16a} remains intractable in the \pa subdomain.

\smallskip

\citet{LaSk20v3} posed as an open problem to determine the complexity of checking whether a given committee satisfies core stability. We show that the problem is coNP-complete. Our proof is written for \pa elections, but the result implies hardness for the candidate-approval setting because \pa elections are a special case of \ca elections.

\begin{theorem}
    For a given party-approval (or candidate-approval) election and a committee $W$, it is coNP-complete to decide whether $W$ satisfies core stability. 
\end{theorem}

\begin{proof}
    The complement problem is clearly in NP since a core deviation provides a certificate.
We reduce from the NP-complete problem exact cover by $3$-sets (X3C).
    Here, given a set $X$ with $|X| = 3r$ and a collection $\mathcal{B}$ of $3$-element subsets of~$X$, the question is whether there exists a selection $\mathcal{B}' \subseteq \mathcal{B}$ of $r$ of the subsets such that every element of $X$ occurs in one of the sets in $\mathcal{B}'$. 
    
    We construct an instance of our problem as follows: For every set $B \in \mathcal{B}$ we introduce a \emph{set candidate} and for every element in $X$ we introduce an \emph{element voter}. We set $k=r$ and introduce one \emph{special voter}, $k-1$ \emph{private candidates} and one \emph{dummy candidate}. The approval sets are as follows: Each element voter $x \in X$ approves exactly those set candidates $B \in \mathcal{B}$ with $x \in B$ and the special voter approves all candidates except the dummy candidate. (Thus, no voter approves the dummy candidate.) Finally, let $W$ be the committee consisting of the private candidates and the dummy candidate.\footnote{The committee $W$ assigns seats only to Pareto-dominated parties, making it clearly suboptimal. One can adjust the reduction to show that the problem remains hard for committees~$W$ that do not give seats to Pareto-dominated parties.}
    Note that $|W| = k$. We claim that $W$ is not core stable if and only if the X3C instance is a \emph{yes} instance.

    Suppose that $W$ is not core stable, and let $S \subseteq N$ and committee $T : P \to \mathbb N$ witness this fact. Without loss of generality, we may assume that $T$ only gives seats to set candidates, since all other candidates are dominated by set candidates. Suppose $|T| = t$. Then $T$ provides positive utility to at most $3t$ element voters. These $3t$ voters on their own can afford $\lfloor 3t \cdot k / n \rfloor \leq 3t \cdot k / (3k + 1) < t$  candidates. Because all element voters in $S$ must obtain positive utility from $T$, it follows that the special voter must be part of $S$. Because the special voter $i$ has $u_i(T) > u_i(W) = k - 1$, we have $u_i(T) = k$. Thus $|T| = k$, and a committee of this size can only be afforded by the grand coalition, so $S = N$. Thus, every element voter is part of $S$ and thus obtains positive utility from $T$, and hence for every element, $T$ contains at least one set candidate corresponding to a set containing that element. It follows that the X3C instance has a solution.
    
    Conversely, every solution to the X3C instance induces a committee $T$ consisting of the $k$ set candidates chosen by the solution. Then $T$ gives positive utility to all element voters and increases the special voter's utility from $k-1$ to $k$. Hence $T$ together with $S = N$ show that $W$ is not core stable.
\end{proof}

In the candidate-approval setting, checking whether a given committee satisfies EJR is coNP-complete~\citep{ABC+16a,AEH+18a}. In other words, given a committee, it is hard to find a cohesive coalition of voters that is underrepresented. Interestingly, this task is tractable in \pa elections. Intuitively, checking becomes easier in \pa elections as groups of voters are already cohesive when they have only a single approved party in common. 

\begin{theorem}
    Given a \pa election $(\N,\P,\A,k)$ and a committee $\W : \P \to \mathbb{N}$, it can be checked in polynomial time whether $\W$ satisfies EJR\@.
    \label{thm:party_ejr_check}
\end{theorem}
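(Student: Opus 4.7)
The plan is to exploit the simplification of EJR in the party-approval setting: a group $S$ is ``cohesive'' iff $\bigcap_{i\in S} A_i \neq \emptyset$, which simply means that $S$ consists of voters who all approve some common party $p \in P$. This lets us enumerate the relevant witnesses to an EJR violation directly.

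First, I would observe that a violation of EJR is certified by some $\ell \in \{1,\dots,k\}$, some party $p \in P$, and a subset $S \subseteq \{i \in N : p \in A_i\}$ such that $|S| \geq \ell\, n/k$ (so that $q(S) \geq \ell$) and $\sum_{q \in A_i} W(q) < \ell$ for every $i \in S$. Since shrinking $S$ can only weaken the $|S|\geq \ell n/k$ condition while preserving the per-voter condition, the optimal choice of $S$ for fixed $p$ and $\ell$ is the maximal one, namely
\[
S_{p,\ell} \;=\; \Bigl\{ i \in N : p \in A_i \text{ and } \textstyle\sum_{q \in A_i} W(q) < \ell \Bigr\}.
\]
Thus $W$ violates EJR if and only if there exist $p \in P$ and $\ell \in \{1,\dots,k\}$ with $|S_{p,\ell}| \geq \ell\, n/k$.

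Next, I would turn this characterization into an algorithm. Precompute $u_i := \sum_{q \in A_i} W(q)$ for each voter $i$ in time $O(\sum_i |A_i|)$. Then, for each party $p \in P$ and each $\ell \in \{1,\dots,k\}$, compute $|S_{p,\ell}|$ by counting how many voters $i$ with $p \in A_i$ satisfy $u_i < \ell$, and check whether $|S_{p,\ell}| \geq \lceil \ell\,n/k \rceil$ (equivalently $q(S_{p,\ell}) \geq \ell$). The total running time is polynomial in the input size, e.g.\ $O(|P|\cdot k \cdot n)$ after preprocessing, and since $k$ is encoded in unary this is a polynomial-time procedure. Report ``EJR holds'' iff no pair $(p,\ell)$ yields a violation.

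The only real subtlety, and the step I would be most careful about, is justifying that the above reduction is complete: one must argue that we lose nothing by restricting attention to the \emph{maximal} witness $S_{p,\ell}$ for each $(p,\ell)$ and to the specific value $\ell = q(S)$. Both reductions are immediate from monotonicity of $q$ in $|S|$ and monotonicity of the condition $u_i < \ell$ in $\ell$, so no further obstacle arises. The result is in sharp contrast to the candidate-approval setting, where cohesiveness additionally requires $|\bigcap_{i\in S} A_i| \geq \ell$, making the analogous check coNP-hard \cite{AEH+18a}.
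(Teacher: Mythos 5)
Your proposal is correct and matches the paper's own proof essentially verbatim: both reduce EJR-checking to enumerating pairs $(p,\ell)$, counting the supporters of $p$ whose committee utility is below $\ell$, and comparing against $\ell\, n/k$, yielding the same $\mathcal{O}(|\P|\, k\, n)$ bound. The only difference is that you spell out the maximality argument for $S_{p,\ell}$, which the paper leaves implicit.
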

\begin{proof}
We describe a procedure to check whether a given committee $\W$ violates EJR. For each party $p \in P$ and each $\ell \in [k]$, define \[S_{p,\ell} = \{i \in N \mid p \in A_i \text{ and } u_i(W) \leq \ell\}\] 
    and check whether $\ell < q(S_{p,\ell})$ holds. If so, the set $S_{p,\ell}$ induces an EJR violation. This is because $\bigcap_{i \in S_{p,\ell}}A_i \neq \emptyset$ and $ u_i(W) \leq \ell < q(S_{p},\ell)$ holds for all $i \in S_{p,\ell}$. 
    
    Now, assume that the condition is not satisfied for any party $p \in P$ and any $\ell \in [k]$. We claim that this proves the nonexistence of an EJR violation. Assume for contradiction that there exists a group $S \subseteq N$ inducing an EJR violation. Let $p \in \bigcap_{i \in S} A_i$ and $\ell = \max_{i \in S} u_i(W)$. By definition, $S \subseteq S_{p,\ell}$ and hence $q(S_{p,\ell})\geq q(S)>\ell$, a contradiction. 
A straightforward implementation of this algorithm has polynomial running time $\mathcal{O}( |\P| \, k \, n)$.  
\end{proof}

We observe a similar effect for \emph{proportional justified representation (PJR)}, a proportionality axiom introduced by \cite{SFF+17a} which is weaker than EJR. While checking whether a committee satisfies PJR is coNP-complete in the candidate-approval setting, we can solve the problem in polynomial time via submodular minimization in our setting. For a formal definition and the proof, see Appendix \ref{sec:checkPJR}.

\section{Discussion}
In this paper, we have initiated the axiomatic analysis of approval-based apportionment.
On a technical level, it would be interesting to see whether the \pa domain allows us to satisfy other combinations of axioms that are not known to be attainable in \ca elections.
For instance, the compatibility between strong representation axioms and certain notions of support monotonicity is an open problem~\citep{SF19}.

We have presented our setting guided by the application of apportioning parliamentary seats to political parties.
But our formal setting has other interesting applications.
An example would be participatory budgeting settings where items all have equal costs and come in different types.
For instance, a university department could decide how to allocate Ph.D.\ scholarships across different research projects, in a way that respects the preferences of funding organizations.

As another example, the literature on multiwinner elections suggests many applications to recommendation problems~\citep{SFL16a}.
For instance, one might want to display a limited number of news articles, movies, or advertisements in a way that fairly represents the preferences of the audience.
These preferences might be expressed not over individual pieces of content, but over content producers (such as newspapers, studios, or advertising companies), in which case our setting provides rules that decide how many items should be contributed by each source.
Expressing preferences on the level of content producers is natural in repeated settings, where the relevant pieces of content change too frequently to elicit voter preferences on each occasion.
Besides, content producers might reserve the right to choose which of their content should be displayed.

In the general \ca setting, the search continues for rules that satisfy EJR and committee monotonicity, or core stability.
But for the applications mentioned above, these guarantees are already achievable today.

\paragraph{Acknowledgements}
This work was partially supported by the Deutsche Forschungsgemeinschaft under grant BR~4744/2-1. 
We thank Steven Brams and Piotr Skowron for suggesting the setting of party approval to us, and we thank 
Rupert Freeman,
Levi Geiser,
Anne-Marie George,
Ayumi Igarashi, Svante Janson,
Jérôme Lang,
Ariel Procaccia, and the anonymous reviewers
for helpful comments and discussions.

\bibliographystyle{plainnat}

\begin{thebibliography}{55}
	\providecommand{\natexlab}[1]{#1}
	\providecommand{\url}[1]{\texttt{#1}}
	\expandafter\ifx\csname urlstyle\endcsname\relax
	\providecommand{\doi}[1]{doi: #1}\else
	\providecommand{\doi}{doi: \begingroup \urlstyle{rm}\Url}\fi
	
	\bibitem[Al\'os-Ferrer(2006)]{Alos06a}
	C.~Al\'os-Ferrer.
	\newblock A simple characterization of approval voting.
	\newblock \emph{Social Choice and Welfare}, 27\penalty0 (3):\penalty0 621--625,
	2006.
	
	\bibitem[Aziz et~al.(2015)Aziz, Gaspers, Gudmundsson, Mackenzie, Mattei, and
	Walsh]{AGG+15a}
	H.~Aziz, S.~Gaspers, J.~Gudmundsson, S.~Mackenzie, N.~Mattei, and T.~Walsh.
	\newblock Computational aspects of multi-winner approval voting.
	\newblock In \emph{Proceedings of the 14th International Conference on
		Autonomous Agents and Multiagent Systems (AAMAS)}, pages 107--115. IFAAMAS,
	2015.
	
	\bibitem[Aziz et~al.(2017)Aziz, Brill, Conitzer, Elkind, Freeman, and
	Walsh]{ABC+16a}
	H.~Aziz, M.~Brill, V.~Conitzer, E.~Elkind, R.~Freeman, and T.~Walsh.
	\newblock Justified representation in approval-based committee voting.
	\newblock \emph{Social Choice and Welfare}, 48\penalty0 (2):\penalty0 461--485,
	2017.
	
	\bibitem[Aziz et~al.(2018)Aziz, Elkind, Huang, Lackner,
	S{\'a}nchez-Fern{\'a}ndez, and Skowron]{AEH+18a}
	H.~Aziz, E.~Elkind, S.~Huang, M.~Lackner, L.~S{\'a}nchez-Fern{\'a}ndez, and
	P.~Skowron.
	\newblock On the complexity of extended and proportional justified
	representation.
	\newblock In \emph{Proceedings of the 32nd AAAI Conference on Artificial
		Intelligence (AAAI)}, pages 902--909. AAAI Press, 2018.
	
	\bibitem[Aziz et~al.(2019{\natexlab{a}})Aziz, Bogomolnaia, and Moulin]{ABM17a}
	H.~Aziz, A.~Bogomolnaia, and H.~Moulin.
	\newblock Fair mixing: the case of dichotomous preferences.
	\newblock In \emph{Proceedings of the 19th ACM Conference on Economics and
		Computation (ACM-EC)}, pages 753--781, 2019{\natexlab{a}}.
	
	\bibitem[Aziz et~al.(2019{\natexlab{b}})Aziz, Bogomolnaia, and Moulin]{ABM19}
	H.~Aziz, A.~Bogomolnaia, and H.~Moulin.
	\newblock Fair mixing: The case of dichotomous preferences.
	\newblock In \emph{Proceedings of the 2019 {{ACM Conference}} on {{Economics}}
		and {{Computation} (EC)}}, pages 753--781, 2019{\natexlab{b}}.
	
	\bibitem[Balinski and Young(1975)]{BaYo75a}
	M.~L. Balinski and H.~P. Young.
	\newblock The quota method of apportionment.
	\newblock \emph{The American Mathematical Monthly}, 82\penalty0 (7):\penalty0
	701--730, 1975.
	
	\bibitem[Balinski and Young(1982)]{BaYo82a}
	M.~L. Balinski and H.~P. Young.
	\newblock \emph{Fair Representation: {M}eeting the Ideal of One Man, One Vote}.
	\newblock Yale University Press, 1982.
	
	\bibitem[Barber{\`a} and Coelho(2008)]{BaCo08a}
	S.~Barber{\`a} and D.~Coelho.
	\newblock How to choose a non-controversial list with k names.
	\newblock \emph{Social Choice and Welfare}, 31\penalty0 (1):\penalty0 79--96,
	2008.
	
	\bibitem[Bogomolnaia et~al.(2005)Bogomolnaia, Moulin, and Stong]{BMS05}
	A.~Bogomolnaia, H.~Moulin, and R.~Stong.
	\newblock Collective choice under dichotomous preferences.
	\newblock \emph{Journal of Economic Theory}, 122\penalty0 (2):\penalty0
	165--184, 2005.
	
	\bibitem[Brams and Fishburn(2007)]{BrFi07c}
	S.~J. Brams and P.~C. Fishburn.
	\newblock \emph{Approval Voting}.
	\newblock Springer-Verlag, 2nd edition, 2007.
	
	\bibitem[Brams and Kilgour(2014)]{BrKi14a}
	S.~J. Brams and D.~M. Kilgour.
	\newblock Satisfaction approval voting.
	\newblock In \emph{Voting Power and Procedures}, Studies in Choice and Welfare,
	pages 323--346. Springer, 2014.
	
	\bibitem[Brams et~al.(2007)Brams, Kilgour, and Sanver]{BKS07a}
	S.~J. Brams, D.~M. Kilgour, and M.~R. Sanver.
	\newblock A minimax procedure for electing committees.
	\newblock \emph{Public Choice}, 132:\penalty0 401--420, 2007.
	
	\bibitem[Brams et~al.(2019)Brams, Kilgour, and Potthoff]{BKP19a}
	S.~J. Brams, D.~M. Kilgour, and R.~F. Potthoff.
	\newblock Multiwinner approval voting: an apportionment approach.
	\newblock \emph{Public Choice}, 178\penalty0 (1--2):\penalty0 67--93, 2019.
	
	\bibitem[Brill(2021)]{Bril21a}
	M.~Brill.
	\newblock From computational social choice to digital democracy.
	\newblock In \emph{Proceedings of the 30th International Joint Conference on
		Artificial Intelligence (IJCAI) Early Career Spotlight Track}, pages
	4937--4939. IJCAI, 2021.
	
	\bibitem[Brill et~al.(2017)Brill, Freeman, Janson, and Lackner]{BFJL16a}
	M.~Brill, R.~Freeman, S.~Janson, and M.~Lackner.
	\newblock Phragm\'{e}n's voting methods and justified representation.
	\newblock In \emph{Proceedings of the 31st AAAI Conference on Artificial
		Intelligence (AAAI)}, pages 406--413. AAAI Press, 2017.
	
	\bibitem[Brill et~al.(2018)Brill, Laslier, and Skowron]{BLS18a}
	M.~Brill, J.-F. Laslier, and P.~Skowron.
	\newblock Multiwinner approval rules as apportionment methods.
	\newblock \emph{Journal of Theoretical Politics}, 30\penalty0 (3):\penalty0
	358--382, 2018.
	
	\bibitem[Brill et~al.(2020)Brill, G{\"o}lz, Peters, Schmidt-Kraepelin, and
	Wilker]{BGP+19a}
	M.~Brill, P.~G{\"o}lz, D.~Peters, U.~Schmidt-Kraepelin, and K.~Wilker.
	\newblock Approval-based apportionment.
	\newblock In \emph{Proceedings of the 34th AAAI Conference on Artificial
		Intelligence (AAAI)}, pages 1854--1861. AAAI Press, 2020.
	
	\bibitem[Camps et~al.(2019)Camps, Mora, and Saumell]{CMS19a}
	R.~Camps, X.~Mora, and L.~Saumell.
	\newblock The method of {E}nestr{\"o}m and {P}hragm{\'e}n for parliamentary
	elections by means of approval voting.
	\newblock Technical report, arXiv:1907.10590 [econ.TH], 2019.
	
	\bibitem[Cheng et~al.(2019)Cheng, Jiang, Munagala, and Wang]{CJM+19}
	Y.~Cheng, Z.~Jiang, K.~Munagala, and K.~Wang.
	\newblock Group {{Fairness}} in {{Committee Selection}}.
	\newblock In \emph{Proceedings of the 2019 {{ACM Conference}} on {{Economics}}
		and {{Computation}} (EC)}, pages 263--279, 2019.
	
	\bibitem[Droop(1881)]{Droo81a}
	H.~R. Droop.
	\newblock On methods of electing representatives.
	\newblock \emph{Journal of the Statistical Society of London}, 44\penalty0
	(2):\penalty0 141--202, 1881.
	
	\bibitem[Duddy(2015)]{Dudd15a}
	C.~Duddy.
	\newblock Fair sharing under dichotomous preferences.
	\newblock \emph{Mathematical Social Sciences}, 73:\penalty0 1--5, 2015.
	
	\bibitem[Elkind et~al.(2017)Elkind, Faliszewski, Skowron, and Slinko]{EFSS17a}
	E.~Elkind, P.~Faliszewski, P.~Skowron, and A.~Slinko.
	\newblock Properties of multiwinner voting rules.
	\newblock \emph{Social Choice and Welfare}, 48:\penalty0 599--632, 2017.
	
	\bibitem[Fain et~al.(2016)Fain, Goel, and Munagala]{FGM16b}
	B.~Fain, A.~Goel, and K.~Munagala.
	\newblock The core of the participatory budgeting problem.
	\newblock In \emph{Proceedings of the 12th International Workshop on Internet
		and Network Economics (WINE)}, Lecture Notes in Computer Science (LNCS),
	pages 384--399, 2016.
	
	\bibitem[Fain et~al.(2018)Fain, Munagala, and Shah]{FMS18}
	B.~Fain, K.~Munagala, and N.~Shah.
	\newblock Fair allocation of indivisible public goods.
	\newblock In \emph{Proceedings of the 2018 {{ACM Conference}} on {{Economics}}
		and {{Computation}} (EC)}, pages 575--592, 2018.
	
	\bibitem[Fishburn(1978)]{Fish78d}
	P.~C. Fishburn.
	\newblock Axioms for approval voting: {D}irect proof.
	\newblock \emph{Journal of Economic Theory}, 19\penalty0 (1):\penalty0
	180--185, 1978.
	
	\bibitem[Fishburn(1979)]{Fish79a}
	P.~C. Fishburn.
	\newblock Symmetric and consistent aggregation with dichotomous voting.
	\newblock In J.~J. Laffont, editor, \emph{Aggregation and Revelation of
		Preferences}. North-Holland, 1979.
	
	\bibitem[Garey and Johnson(1979)]{GaJo79a}
	M.~R. Garey and D.~S. Johnson.
	\newblock \emph{Computers and Intractability: A Guide to the Theory of
		NP-Completeness}.
	\newblock W. H. Freeman, 1979.
	
	\bibitem[Guerdjikova and Nehring(2014)]{GuNe14a}
	A.~Guerdjikova and K.~Nehring.
	\newblock Weighing experts, weighing sources: {T}he diversity value.
	\newblock Mimeo, 2014.
	
	\bibitem[Israel and Brill(2021)]{IsBr21b}
	J.~Israel and M.~Brill.
	\newblock Dynamic proportional rankings.
	\newblock In \emph{Proceedings of the 30th International Joint Conference on
		Artificial Intelligence (IJCAI)}, pages 261--267. IJCAI, 2021.
	
	\bibitem[Janson(2016)]{Jans16a}
	S.~Janson.
	\newblock Phragm{\'e}n's and {T}hiele's election methods.
	\newblock Technical report, arXiv:1611.08826 [math.HO], 2016.
	
	\bibitem[Janson(2018)]{Jans18a}
	S.~Janson.
	\newblock Thresholds quantifying proportionality criteria for election methods.
	\newblock Technical report, arXiv:1810.06377 [cs.GT], 2018.
	
	\bibitem[Janson and {\"O}berg(2019)]{JO19}
	S.~Janson and A.~{\"O}berg.
	\newblock A piecewise contractive dynamical system and {P}hragm{\'e}n's
	election method.
	\newblock \emph{Bulletin de la Soci{\'e}t{\'e} Math{\'e}matique de France},
	147\penalty0 (3):\penalty0 395--441, 2019.
	
	\bibitem[Jiang et~al.(2020)Jiang, Munagala, and Wang]{JMK20a}
	Z.~Jiang, K.~Munagala, and K.~Wang.
	\newblock Approximately stable committee selection.
	\newblock In \emph{Proceedings of the 52nd Annual ACM SIGACT Symposium on
		Theory of Computing (STOC)}, pages 463--472, 2020.
	
	\bibitem[Kaneko and Nakamura(1979)]{KaNa79a}
	M.~Kaneko and K.~Nakamura.
	\newblock The nash social welfare function.
	\newblock \emph{Econometrica}, 47\penalty0 (2):\penalty0 423--435, 1979.
	
	\bibitem[Kilgour and Marshall(2012)]{KiMa12a}
	D.~M. Kilgour and E.~Marshall.
	\newblock Approval balloting for fixed-size committees.
	\newblock In \emph{Electoral Systems}, Studies in Choice and Welfare, pages
	305--326. Springer, 2012.
	
	\bibitem[Korte and Vygen(2018)]{KV18}
	B.~Korte and J.~Vygen.
	\newblock \emph{Combinatorial Optimization}.
	\newblock {Springer}, 6th edition, 2018.
	
	\bibitem[Lackner and Skowron(2021)]{LaSk20v3}
	M.~Lackner and P.~Skowron.
	\newblock Multi-winner voting with approval preferences.
	\newblock Technical report, arXiv:2007.01795v3 [cs.GT], 2021.
	
	\bibitem[Laslier and Sanver(2010)]{LaSa10a}
	J.-F. Laslier and M.~R. Sanver, editors.
	\newblock \emph{Handbook on Approval Voting}.
	\newblock Studies in Choice and Welfare. Springer-Verlag, 2010.
	
	\bibitem[Mora and Oliver(2015)]{MoOl15a}
	X.~Mora and M.~Oliver.
	\newblock Eleccions mitjan{\c c}ant el vot d'aprovaci{\'o}. {E}l m{\`e}tode de
	{P}hragm{\'e}n i algunes variants.
	\newblock \emph{Butllet{\'\i} de la Societat Catalana de Matem{\`a}tiques},
	30\penalty0 (1):\penalty0 57--101, 2015.
	
	\bibitem[Nash(1950)]{Nash50b}
	J.~F. Nash.
	\newblock The bargaining problem.
	\newblock \emph{Econometrica}, 18\penalty0 (2):\penalty0 155--162, 1950.
	
	\bibitem[Peters and Skowron(2020)]{PeSk20awitharxiv}
	D.~Peters and P.~Skowron.
	\newblock Proportionality and the limits of welfarism.
	\newblock In \emph{Proceedings of the 21st ACM Conference on Economics and
		Computation (ACM-EC)}, pages 793--794. ACM Press, 2020.
	\newblock Full version arXiv:1911.11747 [cs.GT].
	
	\bibitem[Phragm{\'e}n(1894)]{Phra94a}
	E.~Phragm{\'e}n.
	\newblock Sur une m{\'e}thode nouvelle pour r{\'e}aliser, dans les
	{\'e}lections, la repr{\'e}sentation proportionnelle des partis.
	\newblock \emph{\"Ofversigt af Kongliga Vetenskaps-Akademiens F\"orhandlingar},
	51\penalty0 (3):\penalty0 133--137, 1894.
	
	\bibitem[Phragm{\'e}n(1895)]{Phra95a}
	E.~Phragm{\'e}n.
	\newblock \emph{Proportionella val. En valteknisk studie}.
	\newblock Svenska sp{\"o}rsm{\aa}l 25. Lars H{\"o}kersbergs f{\"o}rlag,
	Stockholm, 1895.
	
	\bibitem[Phragm{\'e}n(1896)]{Phra96a}
	E.~Phragm{\'e}n.
	\newblock Sur la th{\'e}orie des {\'e}lections multiples.
	\newblock \emph{{\"O}fversigt af Kongliga Vetenskaps-Akademiens
		F{\"o}rhandlingar}, 53:\penalty0 181--191, 1896.
	
	\bibitem[Phragm{\'e}n(1899)]{Phra99a}
	E.~Phragm{\'e}n.
	\newblock Till fr{\aa}gan om en proportionell valmetod.
	\newblock \emph{Statsvetenskaplig Tidskrift}, 2\penalty0 (2):\penalty0
	297--305, 1899.
	
	\bibitem[Pukelsheim(2014)]{Puke14a}
	F.~Pukelsheim.
	\newblock \emph{Proportional Representation: Apportionment Methods and Their
		Applications}.
	\newblock Springer, 2014.
	
	\bibitem[S\'{a}nchez-Fern\'{a}ndez and Fisteus(2019)]{SF19}
	L.~S\'{a}nchez-Fern\'{a}ndez and J.~A. Fisteus.
	\newblock Monotonicity axioms in approval-based multi-winner voting rules.
	\newblock In \emph{Proceedings of the 18th International Conference on
		Autonomous Agents and Multiagent Systems (AAMAS)}, pages 485--493, 2019.
	
	\bibitem[S{\'a}nchez-Fern{\'a}ndez
	et~al.(2017{\natexlab{a}})S{\'a}nchez-Fern{\'a}ndez, Elkind, and
	Lackner]{SEL17a}
	L.~S{\'a}nchez-Fern{\'a}ndez, E.~Elkind, and M.~Lackner.
	\newblock Committees providing {EJR} can be computed efficiently.
	\newblock Technical report, arXiv:1704.00356v3 [cs.GT], 2017{\natexlab{a}}.
	
	\bibitem[S{\'a}nchez-Fern{\'a}ndez
	et~al.(2017{\natexlab{b}})S{\'a}nchez-Fern{\'a}ndez, Elkind, Lackner,
	Fern{\'a}ndez, Fisteus, {Basanta Val}, and Skowron]{SFF+17a}
	L.~S{\'a}nchez-Fern{\'a}ndez, E.~Elkind, M.~Lackner, N.~Fern{\'a}ndez, J.~A.
	Fisteus, P.~{Basanta Val}, and P.~Skowron.
	\newblock Proportional justified representation.
	\newblock In \emph{Proceedings of the 31st AAAI Conference on Artificial
		Intelligence (AAAI)}, pages 670--676. AAAI Press, 2017{\natexlab{b}}.
	
	\bibitem[S{\'a}nchez-Fern{\'a}ndez et~al.(2021)S{\'a}nchez-Fern{\'a}ndez,
	Fern{\'a}ndez, Fisteus, and Brill]{SFFB18a}
	L.~S{\'a}nchez-Fern{\'a}ndez, N.~Fern{\'a}ndez, J.~A. Fisteus, and M.~Brill.
	\newblock The maximin support method: An extension of the {D'Hondt} method to
	approval-based multiwinner elections.
	\newblock In \emph{Proceedings of the 35th AAAI Conference on Artificial
		Intelligence (AAAI)}, pages 5690--5697. AAAI Press, 2021.
	
	\bibitem[Skowron et~al.(2017)Skowron, Lackner, Brill, Peters, and
	Elkind]{SLB+17a}
	P.~Skowron, M.~Lackner, M.~Brill, D.~Peters, and E.~Elkind.
	\newblock Proportional rankings.
	\newblock In \emph{Proceedings of the 26th International Joint Conference on
		Artificial Intelligence (IJCAI)}, pages 409--415. IJCAI, 2017.
	
	\bibitem[Skowron et~al.(2016)Skowron, Faliszewski, and Lang]{SFL16a}
	P.~K. Skowron, P.~Faliszewski, and J.~Lang.
	\newblock Finding a collective set of items: From proportional
	multirepresentation to group recommendation.
	\newblock \emph{Artificial Intelligence}, 241:\penalty0 191--216, 2016.
	
	\bibitem[{Speroni di Fenizio} and Gewurz(2019)]{SpGe19a}
	P.~{Speroni di Fenizio} and D.~A. Gewurz.
	\newblock The space of all proportional voting systems and the most
	majoritarian among them.
	\newblock \emph{Social Choice and Welfare}, 52\penalty0 (4):\penalty0 663--683,
	2019.
	
	\bibitem[Thiele(1895)]{Thie95a}
	T.~N. Thiele.
	\newblock Om flerfoldsvalg.
	\newblock \emph{Oversigt over det Kongelige Danske Videnskabernes Selskabs
		Forhandlinger}, pages 415--441, 1895.
	
\end{thebibliography}

\clearpage
\appendix

\section{Omitted Proofs}\label{app:compAspects}

\subsection{Proof of \Cref{thm:pav_np_hard}}
\label{app:pavhard}

We show NP-hardness by reduction from the NP-complete problem \textsc{Independent Set}~\citep{GaJo79a}.
\problemdef{Independent Set}{Undirected graph $G=(V,E)$, $t \in \mathbb{N}$.}{Is there a vertex subset $V' \subseteq V$ of size $|V'|=t$ such that no two vertices in $V'$ are connected by an edge in $G$?}
This problem is NP-hard even when restricted to cubic graphs (where every vertex has degree~3) \citep{GaJo79a}. 
Our reduction is a simplified version of the reduction proposed by \citet[Theorem 1]{AGG+15a}.

\restatePAVhard*

\begin{proof}
For a given cubic graph $G=(V,E)$ and independent set size $t \in [|V|]$, we construct a \pa election $(\N,\P,\A,k)$ in the following.
    For each vertex $v \in V$, there is a party $p_v \in \P$.
    For every edge $e=\{u,v\} \in E$, there is one voter in $\N$ who approves exactly $p_u$ and $p_v$.
Lastly, we set $k=t$.

    This construction is clearly polynomial in the size of $G$.
    We show that $G$ has an independent set of size $t$ iff there is a committee $\W$ for the election $(\N,\P,\A,k)$ with $\PAV(\W) \geq s = 3 t$.

    ``$\Rightarrow$'': Assume that $G$ has an independent set $V' \subseteq V$ of size $|V'|=t$.
    Consider the committee $\W$ where for every vertex $v \in V'$, the party $p_v$ receives exactly one seat (thus, the committee has size $k=t$).
    Each party $p_v$ is approved by three voters, namely all those voters corresponding to edges that are incident to $v$.
    Because $V'$ is an independent set, no voter approves more than one party in the committee, and thus only has a single seat on the committee belonging to an approved party.
    Consequently, the total PAV score of $\W$ is exactly $3t$.

    ``$\Leftarrow$'': Assume that $\W$ is a committee with $\PAV(\W) \geq 3 t$ for the constructed election. The PAV score of a given committee can be computed by starting with the empty committee and then iteratively adding up the marginal PAV score of each seat within the committee. Every party $p_v$ is approved by exactly three voters and therefore, giving one seat to $p_v$ in the committee can increase the PAV score by at most three. As there are only $t$ seats available, every seat assignment has to increase the PAV score by exactly three. In order to achieve an increase of three when adding a seat to $p_v$, all the voters who approve $p_v$ must have been previously completely unrepresented. Thus, all parties present in $\W$ receive only one seat and do not have any common approving voters. By construction, this implies that the set of vertices $\{ v \in V : \W(p_v) > 0 \}$ corresponding to $\W$ is an independent set of size $t$. 
\end{proof}

\subsection{Proof of \Cref{thm:pav_comp_core}}
\label{subsec:efficientCore}

In the following we prove that in the \pa subdomain, core-stable committees can be computed in polynomial time. We make use of a local search procedure, introduced for the \ca setting by \citet{AEH+18a}, which approximates a local maximum of the PAV score function. \citet{AEH+18a} show that their algorithm runs in polynomial time and returns committees providing EJR. For \pa elections, we show that, by a minor adjustment of the algorithm, committees computed by LS-PAV satisfy core stability. In \Cref{alg:lspav} we slightly adjust the original definition by parameterizing the procedure by the approximation threshold $\epsilon$. Note that, once again, the algorithm is defined in terms of \ca elections; in \Cref{subsec:embeddingPappCapp} we show how to apply \ca rules to \pa elections. 

\begin{algorithm}
    \caption{LS-PAV (Candidate-Approval Rule)}
    \label{alg:lspav}
    \begin{algorithmic}
        \Function{LS-PAV}{$\N,\C,\A,k, \epsilon$}
            \State $\W \!\gets\! k$ arbitrary candidates from $C$ 
\While{$\exists \; c \in W, c' \in C \! \setminus \! W $ such that $PAV(W \! \setminus \! \{c\} \! \cup \! \{c'\}) \geq PAV(W) \!+ \! \epsilon$}
            \State $W \gets W \setminus \{c\} \cup \{c'\}$
            \EndWhile
            \State \Return \W
        \EndFunction
    \end{algorithmic}
\end{algorithm}

\restateCoreInPoly*

\begin{proof}
We show that LS-PAV with threshold $\epsilon = \frac{1}{(1 + 2 (k-1))(k-1)k}$
always selects a committee from the core when $k>1$ and that the procedure runs in polynomial time for this specific choice of $\epsilon$. This suffices to prove the Theorem as computing a core-stable committee for $k=1$ is trivial.
The proof is an extension of the proof of \Cref{thm:pav_papp_core}. 

For some \pa election $(N,P,A,k)$ let $W$ be a committee selected by LS-PAV with $\epsilon$ and assume that $W$ is not core stable. Hence, there exists $S \subseteq N$, $T: P \rightarrow \mathbb{N}$, $\ell \in [k]$ with $\vert S \vert \geq \ell \, n/k, \vert T \vert = \ell$ and 
\begin{equation}
u_i(T) > u_i(W) \; \forall \; i \in S. \label{eq:SisHappier}
\end{equation}
We start by reproducing some observations which were done within the beginning of the proof of \Cref{thm:pav_papp_core}. For more detailed arguments, we refer to this proof. 

We define the marginal contribution of a party $p$ to the PAV score of $W$
 \begin{equation*}
        \MC(p, \W) = \PAV(\W) - \PAV(\W - \{p\}) = \sum_{i \in \N \, : \, p \in A_i} \frac{1}{u_i(\W)}
    \end{equation*}
    and obtain an upper bound for the sum of the marginal contribution of all seats in $W$ w.r.t. $W$, i.e., \begin{equation}\sum_{p \in P} W(p) \cdot \MC(p,\W) \leq n \label{eq:lowerBoundAv}\end{equation} and a lower bound for the sum of marginal contribution of all seats in $T$ to $\W + \{p\}$, where $p$ is the party corresponding to the seat, \ie \begin{equation}\vert S \vert \leq \sum_{p \in P} T(P) \cdot \MC(p, \W + \{p\}).\label{eq:upperBoundAv}\end{equation}
Hence, there exists a party $p_1$ in the support of $W$ for which
\begin{equation}
\MC(p_1,W) \leq n/k \label{eq:lowerbound}
\end{equation}
holds and there exists a party $p_2$ in the support of $T$ for which 
\begin{equation}
n/k \leq \MC(p_2,W+\{p_2\})   \label{eq:upperbound}
\end{equation}
holds. We distinguish three cases: 

\begin{description}
\item[Case 1:]  It holds that
\begin{enumerate}
\item there exists $p_1 \text{ in the support of } W: \MC(p_1,W) \leq n/k - \epsilon$ \textbf{ or} \label{case1con1}
\item there exists $p_2 \text{ in the support of } T: n/k + \epsilon \leq \MC(p_2,W + \{p_2\})$. \label{case1con2}
\end{enumerate}

\smallskip

First assume that the first condition is satisfied. Then, let $p_1$ be a such a party from the support of $W$ and $p_2$ such that (\ref{eq:upperbound}) holds. We define the multiset $W' = W - \{p_1\} + \{p_2\}$ and observe that \begin{align*}
\PAV(W') &= \PAV(W) - \MC(p_1, W) + \MC(p_2, W - \{p_1\} + \{p_2\}) \\ 
& \geq \PAV(W) - \MC(p_1, W) + \MC(p_2, W + \{p_2\})  \geq\PAV(W) + \epsilon,
\end{align*}
a contradiction to the assumption that LS-PAV with parameter $\epsilon$ returned $W$. 
The last inequality follows from case condition \ref{case1con1} and (\ref{eq:upperbound}).

If the second condition holds, an analogous argument yields a contradiction. 

\smallskip

\item[Case 2:] It holds that 

\begin{enumerate}
\item for all $p_1 \text{ in the support of } W: \MC(p_1,W) > n/k - \epsilon$ \textbf{ and} \label{case2con1}
\item for all $p_2 \text{ in the support of } T: \MC(p_2,W\!+\!\{p_2\})\! <\! n/k\! +\! \epsilon$ \textbf{and} \label{case2con2}
\item there exists $i \in S: u_i(W) > 0$. \label{case2con3}
\end{enumerate}

Applying (\ref{eq:lowerBoundAv}) and case condition \ref{case2con1}, we obtain an upper bound for $\MC(p_1,W)$ for any $p_1$ in the support of $W$: 
\begin{align}
\MC(p_1,W) &\leq n - \sum_{p \in P \setminus{p_1}} W(p) \MC(p,W) -  (W(p_1)-1)\; \MC(p_1,W) \nonumber\\ 
&< n - (k-1)\left(\frac{n}{k}-\epsilon\right) = \frac{n}{k} + (k-1)\epsilon.\label{eq:upperbound2}
\end{align}

Analogously, applying (\ref{eq:upperBoundAv}) and case condition \ref{case2con2}, we obtain an lower bound for $\MC(p_2,W+\{p_2\})$ for any $p_2$ in the support of $T$. That is, 
\begin{align}
\MC(p_2,W+\{p_2\}) &\geq |S| - \sum_{p \in P \setminus{p_2}} T(p) \; \MC(p,W + \{p\}) -  (T(p_2)-1)\; \MC(p_2,W+\{p_2\}) \nonumber\\ 
&> |S| - (\ell-1)\left(\frac{n}{k}+\epsilon\right) \geq \frac{n}{k} - (k-1)\epsilon.\label{eq:lowerbound2}
\end{align}

Subsequently, choose some $i \in S$ with $u_i(W)>0$ (existence guaranteed by case condition \ref{case2con3}) and a party $p_1$ from the support of $W$ which is also included in the approval set of voter $i$, $A_i$. Then, choose a party $p_2$ in the support of $T$ which is also approved by voter $i$ but $W(p_2) < T(p_2)$ (existence guaranteed by the fact that voter $i$ prefers committee $T$ to committee $W$). Note that in particular, the restrictions made by case conditions \ref{case2con1} and \ref{case2con2} already imply that $p_1$ and $p_2$ are different parties.\footnote{Assume for contradiction that $p_1$ and $p_2$ are the same parties. Then, in particular it holds that $\MC(p_1,W) = \MC(p_2,W)$ and $\MC(p_1, W + \{p_1\}) = \MC(p_2, W + \{p_2\})$. Consider the difference $\MC(p_1, W) - \MC(p_1,W + \{p_1\})$. Note that we do not do any further assumptions in order to derive (\ref{eq:strictseparation}). Preempting (\ref{eq:strictseparation}), we know that $\MC(p_1, W) - \MC(p_1,W + \{p_1\}) \geq \frac{1}{k(k-1)}$ holds, but on the other hand we get from (\ref{eq:upperbound2}) and (\ref{eq:lowerbound2}) that $\MC(p_1,W) - \MC(p_1,W+\{p_1\}) \leq 2(k-1)\epsilon = \frac{1}{1/2 + k(k-1)} < \frac{1}{k(k-1)}$ holds, a contradiction.}

For this choice of $p_1$ and $p_2$ we aim quantify the gap between the contribution of $p_2$ with respect to $W - \{p_1\} + \{p_2\}$ and the contribution of $p_2$ with respect to $W + \{p_2\}$. More precisely, we will show that \begin{equation}
\MC(p_2,W - \{p_1\} + \{p_2\})  \geq  \MC(p_2,W + \{p_2\}) +  \frac{1}{k(k-1)}. \label{eq:strictseparation}
\end{equation}

To this end recall that voter $i$ supports party $p_1$ and hence
\begin{equation}
u_i(W - \{p_1\}) = u_i(W) - 1. \label{eq:refine1}
\end{equation}
Moreover, for all remaining voters $j \in N \setminus \{i\}$ it holds that
\begin{equation}
u_j(W - \{p_1\}) \leq u_j(W) \label{eq:refine2}.
\end{equation}
Lastly, from $i$ being in the deviator set $S$, we know that 
\begin{equation}
u_i(W) \leq k-1 \label{eq:refine3}.
\end{equation}

Let $N_p= \{i \in N : p \in A_i\}$ denote the set of supporters of party $p$ and $N_p^{-i} = N_p \setminus \{i\}$. Putting it all together, we get
\begin{align*}
\MC(p_2,W - \{p_1\} + \{p_2\}) &=  \sum_{j \in N_{p_2}} \frac{1}{u_j(W - \{p_1\}) + 1} \\ 
&=  \sum_{j \in N_{p_2}^{-i}} \frac{1}{u_j(W - \{p_1\}) + 1}  + \frac{1}{u_i(W - \{p_1\}) + 1}\\ 
& \geq  \sum_{j \in N_{p_2}^{-i}} \frac{1}{u_j(W)+ 1}  + \frac{1}{u_i(W)}\\ 
& =  \sum_{j \in N_{p_2}^{-i}} \frac{1}{u_j(W) +1}  + \frac{1}{u_i(W)+ 1} + \frac{1}{(u_i(W)+1)u_i(W)}\\ 
& \geq  \MC(p_2, W +\{p_2\}) +   \frac{1}{k(k-1)}.
\end{align*}
The first inequality holds due to (\ref{eq:refine1}) and (\ref{eq:refine2}) and the second due to (\ref{eq:refine3}). 

Finally, making use of (\ref{eq:upperbound2}),(\ref{eq:lowerbound2}), and (\ref{eq:strictseparation}), we can show 
\begin{align*}
 \PAV(W') &= \PAV(W) - \MC(p_1,W) + \MC(p_2,W - \{p_1\} + \{p_2\})   \\ 
&\geq \PAV(W) -  \MC(p_1,W) + \MC(p_2,W + \{p_2\}) + \frac{1}{k(k-1)} \\
&> \PAV(W) -\frac{n}{k} - (k-1)\epsilon + \frac{n}{k} -(k-1)\epsilon + \frac{1}{k(k-1)} \\ 
&= \PAV(W) -2(k-1)\epsilon + \frac{1}{k(k-1)} \\ 
&= \PAV(W) + \epsilon,
\end{align*}
a contradiction to the termination of LS-PAV. 
The first inequality is due to (\ref{eq:strictseparation}) and the second due to (\ref{eq:upperbound2}) and (\ref{eq:lowerbound2}).

\smallskip

\item[Case 3:] Finally, suppose that we are neither in Case~1 nor in Case~2. 
It follows that $\sum_{i \in S}u_i(W) = 0$ but $\sum_{i \in S}u_i(T) \geq |S| $. Hence, there exists some $p_2$ in the support of~$T$ with at least $\vert S \vert/\vert T \vert \geq n/k$ supporters in $S$. This is a contradiction to the fact that LS-PAV satisfies EJR which was shown by \citet{AEH+18a}.\footnote{Note that \citet{AEH+18a} show that LS-PAV satisfies EJR when $\epsilon' = \frac{n}{k^2}$. Since $\epsilon \leq \epsilon'$ for all $k \geq 2$, their result carries over to LS-PAV with $\epsilon$.}
\end{description}

Lastly, we show that LS-PAV for $\epsilon = \frac{1}{(1 + 2 (k-1))(k-1)k}$ runs in polynomial time in $|P|$,~$n$, and~$k$. 
We follow the proof by \citet{AEH+18a} showing that LS-PAV runs in polynomial time for $\epsilon' = n/k^2$. Per iteration of the while loop, the algorithm computes at most $mk$ PAV scores, which can be done in polynomial time. 
In order to bound the number of while loops, observe that the PAV score of a committee is upper bounded by $n H_k \in \mathcal{O}(n \ln{k})$ and the algorithm improves the PAV score of the best committee found so far in every iteration by at least $\epsilon$. Hence, there are $\mathcal{O}(nk^3\ln{k})$ iterations of the while loop, which suffices to prove the claim.   \end{proof}

\subsection{Checking PJR} \label{sec:checkPJR}

We start by defining \emph{proportional justified representation (PJR)} for \pa elections. 

\begin{definition}\label{def:pjr}
A committee $\W : \P \rightarrow \mathbb{N}$ provides \emph{proportional justified representation (PJR)}, if there is no $S \subseteq \N$ such that $\bigcap_{i \in S} A_i \neq \emptyset$ and $\sum_{p \in \bigcup_{i \in S}\! \A_i} W(p) < q(S)$.
\end{definition}

In words, PJR requires that for every voter group $S$ with a commonly approved party, the committee should contain at least $q(S)$ candidates from the union of all parties approved by voters in $S$. Observe that a committee providing EJR also provides PJR\@.

For showing that checking whether a committee satisfies PJR can be done in polynomial time, we use techniques from submodular optimization. Recall that, given a finite set $U$, a function $f : 2^U \to \mathbb{R}$ is \textit{submodular} if for all subsets $X, Y \subseteq U$ with $X \subseteq Y$ and for every $x \in U \setminus Y$, it holds that \[
        f(X \cup \{x\}) - f(X) \geq f(Y \cup \{x\}) - f(Y).\]
A submodular function $f : 2^U \to \mathbb{Z}$ can be minimized in time polynomial in $|U| + \log \max \{ |f(S)| \, : \, S \subseteq U \}$ \cite[Theorem 14.19]{KV18}.
Applying this result, one can check whether a \pa committee provides PJR in polynomial time.
\begin{theorem}
    Given a \pa election $(\N,\P,\A,k)$ and a committee $\W : \P \to \mathbb{N}$, it can be checked in polynomial time whether $\W$ satisfies PJR\@.
    \label{thm:party_pjr_check}
\end{theorem}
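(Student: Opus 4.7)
The plan is to reduce the PJR check to $|\P|$ instances of unconstrained submodular minimization, one for each candidate ``common party''.

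First, I would rewrite the PJR condition to eliminate its floor function. The committee $\W$ fails PJR iff there exist $p^* \in \P$ and a set $S \subseteq V_{p^*} := \{i \in \N : p^* \in A_i\}$ such that $\sum_{p \in \bigcup_{i \in S} A_i} \W(p) < \lfloor k\,|S|/n \rfloor$. Since the left-hand side is an integer and $\lfloor k\,|S|/n \rfloor \geq m \Leftrightarrow k\,|S| \geq mn$ for any integer $m$, this is equivalent to the floor-free inequality
\[
    k\,|S| - n \cdot \sum_{p \in \bigcup_{i \in S} A_i} \W(p) \,\geq\, n.
\]
This reformulation is the main subtlety: it exploits the integrality of $\sum_{p \in \bigcup_{i \in S} A_i} \W(p)$ to convert a condition involving a floor into a purely linear one.

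Second, for each fixed $p^*$, I would define $f_{p^*}\colon 2^{V_{p^*}} \to \integers$ by $f_{p^*}(S) = n \cdot \sum_{p \in \bigcup_{i \in S} A_i} \W(p) - k\,|S|$. The map $S \mapsto \sum_{p \in \bigcup_{i \in S} A_i} \W(p)$ is a weighted coverage function of $S$ (each party $p$ contributes its weight $\W(p)$ whenever some $i \in S$ approves it), hence monotone and submodular. Adding the modular term $-k\,|S|$ preserves submodularity, and $|f_{p^*}(S)|$ is polynomial in the input size.

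Third, I would invoke the polynomial-time submodular minimization algorithm cited above on each $f_{p^*}$. By the reformulation, $\W$ fails PJR iff $\min_{S \subseteq V_{p^*}} f_{p^*}(S) \leq -n$ for some $p^* \in \P$; the minimizing $S$ is necessarily nonempty (since $f_{p^*}(\emptyset) = 0 > -n$) and, paired with $p^*$, witnesses the violation. Performing $|\P|$ such minimizations yields an overall polynomial-time algorithm.
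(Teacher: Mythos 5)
Your proposal is correct and takes essentially the same route as the paper: for each party $p^*$ the paper minimizes the submodular function $f(S) = \sum_{p \in \bigcup_{i \in S} A_i} \W(p) - |S|\,k/n$ over $S \subseteq N_{p^*}$ and tests whether the minimum is at most $-1$, which after scaling by $n$ is exactly your function $f_{p^*}$ and your threshold $-n$. Your floor-elimination step is the same integrality argument the paper uses in its two-direction equivalence, so the two proofs coincide up to presentation.
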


\begin{proof}
    We fix a committee $\W : \P \to \mathbb{N}$ and define the function $h : 2^{\N} \to \mathbb{N}$ by
    \[ h(S) = \sum_{p \in \bigcup_{i \in S}\! A_i} \W(p), \]
    i.e., for a voter group $S \subseteq \N$, $h(S)$ is the total number of seats that $\W$ allocates to to the union of all parties approved by voters in $S$. Moreover, for each party $p \in P$, we let $N_p = \{i \in N \mid p \in A_i\}$ denote the set of supporters of $p$. Observe that the committee $\W$ satisfies PJR if and only if there is no party $p \in P$ and group of voters $S \subseteq N_p$ with $h(S) < q(S)$. 

We show how to check in polynomial time for a fixed party $p \in P$, whether there exists such a group of voters $S \subseteq N_p$. Then, this procedure can be repeated for every party in $\P$.

We define the function $f : 2^{\N_p} \rightarrow \mathbb{R}$ by
    \[
        f(S) = h(S) - |S| \, \frac{k}{n}
    \]
 
    and show that $f$ is submodular. To this end let $X,Y \subseteq N_p$ with $X \subseteq Y$ and $x \in N_p \setminus Y$. Then, 
    \begin{align*}
    f(X \cup \{x\}) - f(X) &= \sum_{p \in A_x} W(p) - \sum_{p \in A_x \cap (\bigcup_{i \in X} \! A_i)} W(p) - \frac{k}{n} \\ 
    & \geq \sum_{p \in A_x} W(p) - \sum_{p \in A_x \cap (\bigcup_{i \in Y} \!A_i)} W(p) - \frac{k}{n}\\
    & = f(Y \cup \{x\}) - f(Y),
    \end{align*}
    which suffices to prove the submodularity of $f$.

    By multiplying $f$ by $n$, we obtain an integer-valued submodular function with $\max\{n\cdot |f(S)| : S \subseteq N_p\} \leq kn$; thus, we can minimize $f$ in time $\mathcal{O}(n + \log(kn))$.

We show in the following that any $S \subseteq \N_p$ is the witness of a PJR violation if and only if $f(S) \leq -1$.
    
For the direction from left to right,
    assume that $S \subseteq \N_p$ shows a violation of PJR, i.e., $h(S) < q(S)$. Since both values are integers, we know in particular that $h(S)\leq q(S)-1 = \big\lfloor |S|\frac{k}{n} \big \rfloor -1 \leq |S|\frac{k}{n} -1$ holds. This implies $f(S) \leq -1$. 
    
For the direction from right to left,
    fix some $S \subseteq \N_p$ with $f(S) \leq -1$. It follows that $h(S) \leq |S| \frac{k}{n} - 1 < q(S)$, a violation of PJR for the group~$S$.

The above observation implies a natural procedure to check for a PJR violating group within the supporters of some party $p$: Minimize the function~$f$ and check whether its minimum is larger than $-1$. If not, we have found a violation. If the minimum of $f$ is larger than $-1$ for all $p \in P$, then $\W$ satisfies PJR. The described algorithm runs in time $\mathcal{O}\big(|P|(n+\log(kn))\big)$.
\end{proof}

\section{Results on Further Multiwinner Voting Rules}
\label{sec:app:axiomatic}

In this section we consider other approval-based multiwinner voting rules from the literature and study their axiomatic properties in the \pa subdomain. Note that we use the language of the candidate-approval setting and in particular, $W$ is a set (not a multiset) of candidates. In order to apply the described rules in the \pa setting, we can transform any \pa election to a candidate-approval election by introducing $k$ clones of each party (see \Cref{subsec:embeddingPappCapp}).

We focus on five rules that satisfy PJR in the candidate-approval setting, and briefly comment on rules not satisfying PJR in \Cref{app:other}. For all five rules, we show that they do \textit{not} satisfy stronger proportionality axioms in the \pa subdomain; see \Cref{t:papp_summary} for a summary of our observations. Furthermore, we show that \maxP remains computationally intractable when restricting the domain to \pa elections.

\begin{table}[ht!]
\centering
\begin{tabular}{@{}lccccc@{}}
\toprule
Rule            & \phantom{xxx} PJR \phantom{xxx}     & \phantom{xxx} EJR \phantom{xxx}   & Core Stability           \\ \midrule
PAV               & \cmark & \cmark & \cmark \\
\seqP       & \cmark & \xmark & \xmark \\
\maxP      & \cmark & \xmark & \xmark \\
\eneP  & \cmark & \xmark & \xmark \\
Rule X               & \cmark & \cmark & \xmark \\
Maximin support method  & \cmark & \xmark & \xmark \\
\bottomrule
\end{tabular}

\caption{The table contains a summary of the axiomatic properties of candidate-approval rules within the subdomain of \pa elections.}
\label{t:papp_summary}
\end{table}

\subsection{\phrag's Rules}
\label{app:phragrules}
The first three rules we consider are (at least partially) due to Swedish mathematician Lars Edvard \phrag.\footnote{\phrag's original papers are written in French or Swedish \citep{Phra94a,Phra95a,Phra96a,Phra99a}; an English account of this work was composed by \citet{Jans16a}.} The first two rules, \maxP and \seqP, are based on the concept of \textit{load distributions}: It is assumed that adding a candidate to the committee incurs one unit of ``load,'' which needs to be distributed among the approvers of this candidate. The rules aim to select committees for which the associated load can be distributed as evenly as possible among the voters, where the balancedness of a load distribution is measured by the maximal total load of a voter. 

Formally, a real-valued vector $(x_{i,c})_{i \in \N, c \in \C}$ is a \emph{load distribution} for a candidate-approval election $(\N, \C, \A, k)$ if the following properties hold \citep{BFJL16a}:
\begin{align}
    &0 \leq x_{i,c} \leq 1 & \text{ for } i \in \N, c \in \C,
    \label{phr_def:1}\\
    &x_{i,c} = 0 & \text{ if } c \notin A_i,
    \label{phr_def:2}\\
    &\sum_{i \in \N} \sum_{c \in \C} x_{i,c} = k,
    \label{phr_def:3}\\
    &\sum_{i \in \N} x_{i,c} \in \{0,1\} & \text{ for } c \in \C.
    \label{phr_def:4}
\end{align}
In this definition, $x_{i,c}$ represents the load of candidate $c$ that is assigned to voter $i$. The total load of voter $i$ is given by $\sum_{c \in \C} x_{i,c}$. Properties (\ref{phr_def:3}) and (\ref{phr_def:4}) ensure that each load distribution corresponds to a committee of size $k$: candidate $c$ is in the committee if and only if $\sum_{i \in \N} x_{i,c} = 1$.

\medskip

The rule \textbf{\maxP} globally minimizes the balancedness of load distributions and returns committees corresponding to load distributions $(x_{i,c})$ such that $\max_{i \in \N} \sum_{c \in A_i} x_{i,c}$ is minimal. (Ties are broken in a leximax fashion; for details, we refer to \citet{BFJL16a}). In candidate-approval elections, \maxP satisfies EJR and is NP-hard to compute \citep{BFJL16a}. We first show that the computational intractability still holds for \pa elections. 

\begin{theorem}
    Computing a winning committee for \maxP is NP-hard in the \pa subdomain.
    \label{thm:maxphragmen_np_hard}
\end{theorem}
\begin{proof}
    The notion of load distributions can be adapted in a straightforward manner to \pa elections, by replacing constraint (\ref{phr_def:1}) with  $0 \leq x_{i,p} \leq k$ and constraint (\ref{phr_def:4}) with $\sum_{i \in N} x_{i,p} \in [k]$ for all $p \in P$.
    We prove that the following problem is NP-hard:
    \problemdef{Party Approval \maxP}
    {\pa election $(\N,\P,\A,k)$, distribution bound $s \in \mathbb{R}$}
    {Is there a load distribution $(x_{i,p})$ such that $\max_{i \in \N} \sum_{p \in A_i} x_{i,p} \leq s$?}

Similarly to the proof  of \cref{thm:pav_np_hard}, we use a polynomial reduction from \textsc{Independent Set} on cubic graphs. The reduction is a variant of the one by \citet{BFJL16a}, which shows that \maxP is NP-hard in the \ca setting.

    Given a cubic graph $G=(V,E)$ and independent set size $t \in \mathbb{N}$, we define the following \pa election: For every vertex $v \in V$, there is a party $p_v \in \P$. Additionally, for every edge $e=\{u,v\} \in E$, there is a voter in $\N$ who approves exactly $p_u$ and $p_v$. The committee shall be as large as the independent set, that is, $k=t$. To prove that this reduction is sound, we show that $G$ has an independent set of size $t$ if and only if there is a load distribution $(x_{i,p})$ with $\max_{i \in \N} \sum_{p \in A_i} x_{i,p} \leq \frac{1}{3}$.

    ``$\Rightarrow$'': Assume $G$ has an independent set $V' \subseteq V$ of size $|V'|=t$. Because $G$ is cubic, every party in the created election is approved by exactly 3 voters. We define a valid load distribution, in which every party corresponding to a vertex in $V'$ creates a load of $\frac{1}{3}$ on every approving voter. (This also implies that in the induced committee, the parties corresponding to $V'$ receive exactly one seat.) Because $V'$ is an independent set, no voter receives load from multiple parties, and hence the maximal total load of every voter is $\frac{1}{3}$.

    ``$\Leftarrow$'': Assume there is a load distribution $(x_{i,p})$ such that $\max_{i \in \N} \sum_{p \in A_i} x_{i,p} \leq \frac{1}{3}$. Since every party is approved by exactly 3 voters, it follows that $x_{i,p} = \frac{1}{3}$ for a voter $i$ who approves a party $p$ that receives a seat in the induced committee. Consequently, no party receives more than one seat in the induced committee and no voter approves more than one party in the committee. Thus, the committee induces the independent set $\{ v \in V : x_{i, p_v}\! >\! 0 \text{ for some }i \in \N \}$ of size $t$. 
\end{proof}

In order to prove that \maxP does not satisfy EJR in the \pa setting, we use straightforward adaptation of an example by \citet{ABC+16a} (which is also used by \citet{SFF+17a} and \citet{BFJL16a}). 

\begin{proposition}\label{prop:maxP-ejr}
\maxP does not satisfy EJR for \pa elections. 
\end{proposition}

\begin{proof}
    Let $n=8$, $k=4$, and $P=\{A,B,C,D,X\}$. 
The ballot profile is given by
    \begin{alignat*}{4}
    &1 \times \{A,X\}, \qquad
    &1 \times \{B,X\}, \qquad
    &1 \times \{C,X\}, \qquad
    &1 \times \{D,X\}, \\
    &1 \times \{A\}, 
    &1 \times \{B\}, \qquad
    &1 \times \{C\}, 
    &1 \times \{D\}.
    \end{alignat*}
In this election, \maxP gives one seat each to the parties $A,B,C,D$ and thus achieves a perfectly balanced load distribution. Consider the group consisting of the four voters approving party $X$. This group has a quota of $2$, but no voter in this group is represented twice in the \maxP committee.  
    
\end{proof}

The instance from the proof of \Cref{prop:maxP-ejr} also shows that the incompatibility of EJR and \textit{proportional representation (PR)}, a proportionality axiom proposed by \citet{SFF+17a}, remains intact in the \pa subdomain.

\medskip

The rule \textbf{\seqP} constructs committee sequentially, starting with the empty committee and iteratively adding a candidate that increases the maximum voter load the least.
For a formal definition, we again refer to \citet{BFJL16a}.
Seq-\phrag does not satisfy EJR in candidate-elections, and the same is true for the \pa subdomain. 

\begin{proposition}
    \seqP fails EJR in \pa elections.
    \label{lm:seqphragmen_ejr_papp}
\end{proposition}
\begin{proof}
    Fix a natural number $k \geq 282$. We construct a \pa election with parties $A, B, C, D, E, X$. The ballot profile of the $n=2k$ many voters is as follows:
    \begin{align*}
    &1 \times \{A,X\}, \qquad
    1 \times \{B,X\}, \qquad
    1 \times \{C,X\}, \qquad
    1 \times \{D,X\}, \\
    &7 \times \{A,B,C,D\},  \hspace{3.6cm} (2k-11) \times \{E\}.
    \end{align*}
    We first ignore the voters approving $E$ and focus on the $11$ remaining voters. Initially, adding a seat to $A,B,C,$ or $D$ would increase the maximal voter load to $\frac{1}{8}$, while giving $X$ one seat would increase it to $\frac{1}{4}$. Without loss of generality, assume $A$ receives this seat. Then, giving the next seat to $B,C,$ or $D$ would increase the maximal load to $(\frac{7}{8}+1) \cdot \frac{1}{8}=\frac{15}{64}$; giving it to~$A$ would increase it to $\frac{1+1}{8} = \frac{16}{64}$, and giving the seat to~$X$ would increase it to $\frac{\frac{1}{8}+1}{4}=\frac{18}{64}$. Thus, we can assume~$B$ receives the seat. Analogously, the next two seats are allocated to~$C$ and~$D$, respectively---the exact computations can be found in \Cref{t:seqphragmen}. The fifth seat would then be allocated again to $A$, increasing the maximal voter load to $\frac{16473}{32768} \approx 0.50272$.

\begin{table}[t]
\centering
\begin{tabular}{@{}llllll@{}}
\toprule
Party & Iteration 1              & Iteration 2                & Iteration 3                & Iteration 4                & Iteration 5                \\ \midrule
A & \textbf{0.125} & 0.25             & 0.34570          & 0.42944          & \textbf{0.50272} \\
B & 0.125          & \textbf{0.23438} & 0.35938          & 0.44312          & 0.51639          \\
C & 0.125          & 0.23438          & \textbf{0.33008} & 0.45508          & 0.52835          \\
D & 0.125          & 0.23438          & 0.33008          & \textbf{0.41382} & 0.53882          \\
X & 0.25           & 0.28125          & 0.33984          & 0.42236          & 0.52582          \\ \bottomrule
\end{tabular}
\caption{The \seqP computation for the profile in \Cref{lm:seqphragmen_ejr_papp}, when party~$E$ is ignored. The table shows, for each iteration, the maximal voter load that would result from assigning the next seat to a given party, rounded to five significant digits. The bold entries denote which party receives a seat (with lexicographic tie-breaking).}
\label{t:seqphragmen}
\end{table}

Taking the voters for $E$ into account would not affect the computation above, because all voters who approve $E$ do not approve any other party. Thus, in every \seqP iteration, either~$E$ receives a seat or one of $A,B,C,D$ receives a seat, until $A,B,C,D$ all have one seat. Adding a seat to~$E$ increases the load of a voter approving $E$ by $\frac{1}{2k-11}$. Thus, if $k-4$ seats are allocated to $E$, each $E$-voter would have a load of $\frac{k-4}{2k-11}$. Observe that $\lim_{k\to\infty} \frac{k-4}{2k-11} = \frac{1}{2}$ and indeed, $\frac{k-4}{2k-11} < 0.50272$ for all $k\geq282$. Therefore, \seqP returns a committee where $A,B,C,D$ each receive one seat and $E$ receives the remaining $k-4$ seats.

    This is a contradiction to EJR: Since $n/k=2$, EJR demands one of the four voters approving $X$ to be represented at least twice in the committee. This is not the case for the committee selected by \seqP. 
\end{proof}

\medskip

Additionally, Phragmén developed a voting rule which adapts the well-known single transferable vote (STV) system to approval ballots. Following \citet{CMS19a}, we refer to this method as \textbf{\eneP}.
Like \seqP, this method selects candidates iteratively. Initially, every voter has \textit{weight} of $1$ and a candidate's score is the sum of all approving voters' weights. In every round, the candidate with the highest score $s$ is added to the committee. If a voter $i$ with weight $f_i$ approves the candidate who is added to the committee, then their weight will be updated to $f_i \cdot (s - n/k) / s$ if $s > n/k$, and to $0$ otherwise. This process is repeated until all $k$ seats are assigned.

\begin{table}[t]
\centering
\begin{tabular}{@{}llllllll@{}}
\toprule
Party & Round 1 & Round 2 & Round 3 & Round 4 & Round 5 & Round 6 & Round 7 \\ \midrule
$A$ & 240.0 & 179.86 & 179.86 & 103.26 & 103.26 & 103.26 & \textbf{103.26}  \\
$X_1$ & \textbf{242.0} & 120.71 & 120.71 & 120.71 & 120.71 & 120.71 & 60.14  \\
$X_2$ & 190.0 & 190.0 & \textbf{190.0} & 68.71 & 45.96 & 45.96 & 45.96  \\
$X_3$ & 183.0 & 121.86 & 121.86 & 121.86 & 121.86 & \textbf{121.86} & 0.57  \\
$X_4$ & 240.0 & 240.0 & 179.61 & \textbf{134.92} & 13.64 & 13.64 & 13.64  \\
$X_5$ & 241.0 & \textbf{241.0} & 119.71 & 119.71 & 66.13 & 7.86 & 7.86  \\
$X_6$ & 186.0 & 186.0 & 125.11 & 125.11 & \textbf{125.11} & 3.82 & 3.82  \\
\bottomrule
\end{tabular}
\caption{The \eneP computation for the restricted profile in \Cref{lm:p1_ejr_papp}. The table shows the scores of the parties in the first seven iterations. The bold entries denote the party with the highest score.}
\label{t:phragmenstv}
\end{table}

The \eneP rule does not satisfy EJR in candidate-approval elections \citep{SEL17a,CMS19a}, and the same holds for \pa elections. 

\begin{proposition}
    \eneP fails EJR in \pa elections.
    \label{lm:p1_ejr_papp}
\end{proposition}
\begin{proof}
    For $k \geq 18$, consider an election with $n=120k$ voters and $k+1$ parties $A, X_1, \ldots, X_{k}$. The ballot profile is as follows:
\begin{alignat*}{3}
        &120 \times \{A, X_1\}, \qquad
        &120 \times \{A, X_2\}, \qquad
        &122 \times \{X_1, X_3\}, \\
        &70 \times \{X_2, X_4\}, \qquad
        &120 \times \{X_4,X_5\}, \qquad
        &121 \times \{X_5,X_6\}, \\
        &61 \times \{X_3\}, \qquad
        &50 \times \{X_4\}, \qquad
        &65 \times \{X_6\}, \\
        &109 \times \{X_{j}\} \;\text{ for }j \in \{7,\ldots,15\},\\
        &110 \times \{X_{j}\} \; \text{ for }j \in \{16,17,18\}.
    \end{alignat*}
    If $k>18$, we also add 120 voters approving $\{X_j\}$ for every $j \in \{19, \ldots, k\}$.

    First, consider the parties $A,X_1, \ldots, X_6$ only. In \Cref{t:phragmenstv}, the \eneP calculation for an election restricted to these parties is described. Note that in the first 6 iterations, the parties $X_1, \ldots, X_6$ receive one seat each and all have, when selected as winners, a score that exceeds 120. Afterwards, every party has a score strictly smaller than~109.
    
    Furthermore, observe that the parties $X_7, \ldots, X_k$ are all approved by voters who only approve this one particular party. As a result, their scores are not affected when other parties receive a seat. The parties $X_7, \ldots, X_{15}$ have a score of 109, $X_{16},X_{17},X_{18}$ a score of 110, and $X_{19}, \ldots, X_k$ (if they exist) a score of 120. When any of these parties receive a seat, their score is decreased to 0, as they are all approved by at most $n/k$ voters.

    Together, this shows that \eneP firstly allots one seat each to $X_1, \ldots, X_6$. Then, the score of the parties $A, X_1, \ldots, X_6$ is always smaller than 109, and therefore, $X_7, \ldots, X_k$ all receive a seat, which fills the committee. Thus, in the committee selected by \eneP, $X_1, \ldots, X_{k}$ each receive one seat. However, the $240=2 n/k$ voters who approve $A$ form a cohesive group, where at least one voter should be represented by at least two seats according to EJR\@. This is not the case in the committee generated by \eneP\@.  
\end{proof}

\subsection{Rule X}

\textit{Rule X} has been proposed by \citet{PeSk20awitharxiv}. Rule~X is similar to \seqP, but satisfies stronger proportionality guarantees. In particular, Rule~X satisfies EJR, but not core stability \citep{PeSk20awitharxiv}. The same holds for the \pa setting.

\begin{proposition}
    Rule X fails core stability in \pa elections.
\end{proposition}
\begin{proof}
    Consider again the \pa election from Example~\ref{ex:italiannocore}. 
In this election, Rule X gives 8 seats to party $p_0$ and 4 seats each to parties $p_2$ and $p_4$.
    As explained in Example~\ref{ex:italiannocore}, this committee is not in the core.

\end{proof}

\subsection{Maximin Support Method}

The \textit{maximin support method (MMS)} has been proposed by \citet{SFFB18a}. The method has strong similarities to \seqP and selects candidates sequentially. \citet{SFFB18a} show that MMS satisfies PJR, but not EJR. We adapt their EJR counterexample to the \pa setting. 

\begin{proposition}
    The maximin support method fails EJR in \pa elections.
\end{proposition}
\begin{proof}
    Let $n=8$, $k=4$, and $P=\{A,B,C,X\}$. The ballot profile is given by
    \begin{alignat*}{4}
    &5 \times \{A,X\}, \qquad
    &4 \times \{B,X\}, \qquad
    &3 \times \{C,X\},  \\
    &2 \times \{A\}, \qquad
    &1 \times \{B\}, \qquad
    &1 \times \{C\}.
    \end{alignat*}
    In this election, MMS gives one seat each to the parties $A,B,C,X$. Consider the group consisting of the 12 voters approving party $X$. This group has a quota of $3$, but no voter in this group is represented three times in the \maxP committee.  
\end{proof}

\subsection{Other Rules}
\label{app:other}

We also considered several other rules from the literature that are known to violate PJR in the candidate setting:
sequential PAV and reverse sequential PAV~\citep{Thie95a,Jans16a}, 
satisfaction approval voting~\citep{BrKi14a}, 
minimax approval voting~\citep{BKS07a}, 
var-Phragmén~\citep{BFJL16a}, 
GreedyMonroeAV~\citep{SFF+17a}, 
Approval Voting, 
MonroeAV, 
GreedyAV, 
HareAV, and 
Chamberlin--CourantAV (for definitions of the latter five rules, we refer to the article by \citealp{ABC+16a}).
For each of these rules, we verified that they do not satisfy PJR in the \pa subdomain either.  
Since existing counterexamples can be easily adjusted to the \pa setting, we omit the details.

\end{document}